\documentclass[11pt]{article}

\usepackage[a4paper,margin=1in]{geometry}
\usepackage[T1]{fontenc}
\usepackage{amsmath,amssymb,amsthm}
\usepackage{microtype}
\usepackage[colorlinks=true,linkcolor=blue,citecolor=blue,urlcolor=blue]{hyperref}

\hypersetup{
  pdftitle={Hamilton Cycles in 10-Tough (2P2 union P1)-Free Graphs},
  pdfauthor={Qiuyu Chen}
}

\newtheorem{theorem}{Theorem}[section]
\newtheorem{proposition}[theorem]{Proposition}
\newtheorem{corollary}[theorem]{Corollary}
\newtheorem{lemma}[theorem]{Lemma}
\theoremstyle{definition}
\newtheorem{definition}[theorem]{Definition}

\title{Hamilton Cycles in 10-Tough \texorpdfstring{$(2P_2\cup P_1)$}{(2P2 union P1)}-Free Graphs}
\author{Qiuyu Chen\\
Shanghai Jiao Tong University\\
\texttt{canghaimeng@sjtu.edu.cn}}
\date{}

\begin{document}

\maketitle

\begin{abstract}
A graph is called $10$-tough and $(2P_2\cup P_1)$-free if every vertex set whose deletion leaves at least two components has cardinality at least ten times the number of those components and if the graph has no induced subgraph consisting of two disjoint edges and an isolated vertex. We prove that every finite simple $10$-tough $(2P_2\cup P_1)$-free graph on at least three vertices is Hamiltonian. The proof splits according to whether some edge has joint neighbourhood of order at most $4n/11$. In the small-neighbourhood case, a matched path-cover is compressed to a prescribed matching. In the large-neighbourhood case, an asymmetric analysis of the two components left by a putative small cut yields the required connectivity bound. A Hamilton cycle through the prescribed edges is then expanded, and a cycle-extension lemma inserts the remaining vertices.
\end{abstract}

\section{INTRODUCTION}\label{sec:intro}

All graphs in this paper are finite and simple. For a finite set $A$, let $|A|$ denote its cardinality. For a graph $G$, let $V(G)$ and $E(G)$ be its vertex set and edge set, respectively, and let $n:=|V(G)|$ be its order. The graph $G$ is complete if every pair of distinct vertices is adjacent. For a set $X\subseteq V(G)$, the notation $G-X$ denotes the subgraph induced by $V(G)\setminus X$, and $c(G-X)$ denotes its number of connected components. A cycle $C$ of $G$ is Hamiltonian if $V(C)=V(G)$, and $G$ is Hamiltonian if it contains a Hamilton cycle. For a real number $t>0$, the graph $G$ is $t$-tough if
\[
|X|\ge t\,c(G-X)
\]
for every $X\subseteq V(G)$ with $c(G-X)\ge2$; complete graphs are taken to be $t$-tough for every $t>0$. Chv\'atal conjectured that there exists an absolute constant $t_0>0$ such that every $t_0$-tough graph on at least three vertices is Hamiltonian \cite{Chvatal1973}. The conjecture remains open; see Bauer, Broersma, and Schmeichel for a survey \cite{BauerSurvey2006}. The examples of Bauer, Broersma, and Veldman show that such a constant, if it exists, is at least $9/4$ \cite{BauerBroersmaVeldman2000}.

For a vertex $v\in V(G)$, let $N_G(v)$ be its open neighbourhood, let $d_G(v):=|N_G(v)|$ be its degree, and let $\delta(G):=\min\{d_G(v):v\in V(G)\}$ be the minimum degree. Bauer, Broersma, van den Heuvel, and Veldman proved that a $t$-tough graph of order $n\ge3$ is Hamiltonian whenever $\delta(G)>n/(t+1)-1$ \cite{Bauer1995}.

For an integer $k\ge1$, let $P_k$ denote the path on $k$ vertices. If $F_1$ and $F_2$ are graphs, then $F_1\cup F_2$ denotes their vertex-disjoint union; for an integer $r\ge1$, the notation $rF_1$ denotes the disjoint union of $r$ copies of $F_1$. A graph is $F$-free if it has no induced subgraph isomorphic to $F$. A linear forest is a graph whose components are paths. Thus $2P_2\cup P_1$ is the disjoint union of two edges and one isolated vertex.

One approach to Chv\'atal's conjecture is to restrict the induced subgraphs of the graph. For $2P_2$-free graphs, the sufficient toughness bound was successively reduced from $25$ to $3$ and then to $2$ \cite{BroersmaPatelPyatkin2014,Shan2020,OtaSanka2022}. A parallel programme concerns forbidden linear forests on five vertices. Representative results cover $P_2\cup P_3$, $P_3\cup2P_1$, and $P_4\cup P_1$ \cite{Shan2021,GaoShan2022,Shan2026P4}; related refinements for $P_2\cup kP_1$ appear in \cite{ShiShan2022,XuLiZhou2024,Sanka2026}. Shan and Tanyel recently proved that every $11$-tough $(2P_2\cup P_1)$-free graph is Hamiltonian \cite{ShanTanyel2026}. Together with the preceding work, their theorem settles the existence of a sufficient toughness constant for every five-vertex linear forest other than $P_5$. Tanyel subsequently noted that it was unclear whether the bound $11$ is best possible and identified its improvement as an open direction \cite{Tanyel2026Dissertation}. The question addressed here is whether the explicit bound for $2P_2\cup P_1$ can be lowered.

Our main result is the following.

\begin{theorem}\label{thm:main}
Every $10$-tough $(2P_2\cup P_1)$-free graph on at least three vertices is Hamiltonian.
\end{theorem}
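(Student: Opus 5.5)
We will follow the strategy outlined in the abstract, modelled on the Shan--Tanyel argument for $11$-tough graphs, and sharpen the counting at the critical threshold. Let $G$ be $10$-tough and $(2P_2\cup P_1)$-free on $n\ge3$ vertices, and assume $G$ is not complete. For an edge $xy$ write $N(xy):=(N(x)\cup N(y))\setminus\{x,y\}$.

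The key structural observation is that for every edge $xy$, the set $R:=V(G)\setminus(N(xy)\cup\{x,y\})$ induces a $2P_2$-free graph. Indeed, two independent edges in $R$ together with $x$ would give an induced $2P_2\cup P_1$. The same holds with $y$ in place of $x$, so $G[R]$ is actually $(P_2\cup P_1)$-free relative to $x$'s non-neighbours in a strong sense. The proof then splits on whether some edge $xy$ has $|N(xy)|\le 4n/11$.

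\emph{Small-neighbourhood case.} Fix an edge $xy$ with $|N(xy)|\le 4n/11$, so $|R|\ge 7n/11-2$. We proceed in three steps.
\begin{enumerate}
\item Since $G[R]$ is $2P_2$-free, its structure is tightly controlled. We will use toughness applied to $N(xy)$ and its subsets, together with a Hall-type argument, to find a path cover of $G[R]$ with few paths, each of whose endpoints is matched into $N(xy)$.
\item We then compress this matched path-cover into a prescribed matching $M$ between path ends and vertices of $N(xy)$.
\item The ambient $10$-toughness gives $|N(S)|\ge 10\cdot(\text{number of components})$, which should let us build a Hamilton cycle of an auxiliary graph (paths contracted to edges) through the prescribed edges. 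Expanding the contracted edges then recovers a cycle through all of $R$.
\end{enumerate}

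\emph{Large-neighbourhood case.} Every edge now has $|N(xy)|>4n/11$. We aim to show that $G$ has high connectivity, say $\kappa(G)$ large relative to the independence-type parameter. Suppose a cut $S$ of size below the target leaves components $D_1,D_2$ (toughness forces few components, and we handle exactly two). Picking an edge in one component and a vertex in the other, $(2P_2\cup P_1)$-freeness forces one side to be nearly complete or dominated. The asymmetry matters here: we treat the smaller component $D_1$ by using that edges there have neighbourhoods exceeding $4n/11$, which lie inside $D_1\cup S$. This forces $|D_1\cup S|>4n/11$, and comparing with $D_2$ gives the contradiction. The resulting connectivity bound then feeds the same cycle-expansion step.

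\emph{Finishing.} In both cases we obtain a long cycle $C$ containing the prescribed structure. A cycle-extension lemma then inserts each remaining vertex $v$. We use $2P_2\cup P_1$-freeness on $v$ and consecutive cycle edges $u_iu_{i+1}$: if $v$ has no two neighbours that are consecutive along $C$ (or suitably crossing), then $v$ together with two edges of $C$ forms a forbidden induced subgraph, or else toughness is violated by the set $N(v)\cap V(C)$. This is a standard Bondy-type insertion.

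\emph{Main obstacle.} The crux is the arithmetic at the threshold $4n/11$ with $t=10$ rather than $11$. We must verify that the two case inequalities close, so that $10|N(xy)|$-type bounds from the path-cover compression and $|D_1\cup S|>4n/11$ from the cut analysis overlap exactly. We would first try to optimise the split by bounding the number of paths in the cover by $|N(xy)|/10$, using toughness on $N(xy)\cup\{x,y\}$, and then check that the residual inequalities are consistent.
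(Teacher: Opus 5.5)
Your outline follows the abstract's skeleton, but the concrete mechanisms are either missing or point the wrong way, so it is not yet a proof.

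\textbf{Small-neighbourhood case.} You cover the large side $R$ by paths whose ends are matched into $N(xy)$. You then hope to close a Hamilton cycle on $N(xy)\cup\{x,y\}$ plus the contracted edges. This fails in general. Nothing prevents many vertices of $N(xy)$ from having almost all their neighbours in $R$, and each such vertex gains at most one contracted edge. The auxiliary graph can therefore have very low connectivity, and no condition of the form $\kappa(H)\ge|L|+\alpha(H)$ (the H\"aggkvist--Thomassen lemma the paper uses) is available.

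The missing idea is to split $S=N(xy)$ into two parts. $S_1$ consists of the vertices with fewer than $2a$ neighbours in $D_2=R$, where $a=n/11$, and $S_2:=S\setminus S_1$ is moved to the \emph{large} side $G_2=G[S_2\cup D_2]$. The paper then does two things:
\begin{itemize}
\item It covers the \emph{small} side $G_1=G[S_1\cup\{x,y\}]$ by $\max\{1,s(G_1)\}\le\alpha(G_1)\le a$ matched paths with ends in $G_2$. This requires $G_1$ to be $(P_2\cup P_1)$-free, which follows from the count $7a-6a>a\ge\alpha(G)$.
\item It proves $\kappa(G_2)\ge 2a\ge|L|+\alpha(G_2^*)$.
\end{itemize}
You never derive $\alpha(G)\le n/11$ from toughness, and that bound drives every one of these counts. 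Your observation that $G[R]$ is $2P_2$-free is also weaker than what is needed. The edge $xy$ together with an induced $P_2\cup P_1$ in $R$ would be forbidden, so $G[R]$ is complete multipartite.

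\textbf{Large-neighbourhood case.} Aiming for large $\kappa(G)$ cannot be the target. After the Bauer--Broersma--van den Heuvel--Veldman reduction, which you omit, one has $\delta(G)\le a-1$, hence $\kappa(G)<a$. That leaves no room against $\alpha(G)$, which you only know to be at most $a$.

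The paper instead proceeds in four steps.
\begin{itemize}
\item It deletes $S=\{v:d(v)<2a\}$, which is independent by the edge hypothesis.
\item It proves $\kappa(G-S)\ge\ell+a$, where $\ell$ counts the vertices of degree below $a$.
\item It threads those vertices in via the $K_{1,2}$ star-matching lemma and compression.
\item Only then does it insert the vertices of degree in $[a,2a)$, using the cycle-extension lemma, which needs more than $n/11-1$ neighbours on the cycle.
\end{itemize}
Your ``Bondy-type insertion'' of every remaining vertex fails exactly for vertices of degree below $n/11$. The claim that a non-insertable vertex plus two cycle edges induces $2P_2\cup P_1$ is also unjustified, since those cycle edges need not be anticomplete.

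Your cut step does not close either. With cut $W$ and $|S\cup W|<3a$, the joint-neighbourhood bound only gives that each component has order $>a$ and that the two orders sum to more than $8a$; there is no contradiction yet. The paper needs more:
\begin{itemize}
\item A vertex of degree below $a$ rules out two components of order above $2a$.
\item The asymmetric argument of Lemma~\ref{lem:asymm} handles $a<|A|\le2a$, $|B|>6a$. There $G[A\cup S]$ is $P_4$-free, and the non-neighbours in $A$ of all vertices of degree below $a$ lie in one common class $P$. Then $P\cup S_1$ is independent, so $|A|+|S|+\ell<3a$, and an edge of $A$ has joint neighbourhood below $4a$.
\end{itemize}
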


The proof of Theorem~\ref{thm:main} centres on an asymmetric analysis of the two components left by a putative small cut. Put $a:=n/11$. In the large-neighbourhood branch, define
\[
S:=\{v\in V(G):d_G(v)<2a\},\qquad
S_1:=\{v\in S:d_G(v)<a\},
\]
let $\ell:=|S_1|$, and let $G_2:=G-S$. For a graph $H$, write $\kappa(H)$ for its vertex-connectivity: $\kappa(H)=|V(H)|-1$ if $H$ is complete, and otherwise $\kappa(H)$ is the minimum size of a vertex set whose deletion disconnects $H$. The asymmetric analysis forces one component to have order in $(a,2a]$ and the other to have order greater than $6a$. A resulting partition into independent vertex classes, with every two vertices from different classes adjacent, yields
\[
\kappa(G_2)\ge \ell+\frac{n}{11}.
\]
This asymmetric small-cut lemma is combined with the compression of vertex-disjoint paths to pairwise vertex-disjoint added edges and with a cycle theorem for prescribed edges.

For an outline, the minimum-degree theorem handles $\delta(G)>a-1$. In the residual range, we split according to whether some edge $uv\in E(G)$ satisfies
\[
|N_G(u)\cup N_G(v)|\le4a.
\]
In the small-neighbourhood case, the sparse side is covered by vertex-disjoint paths with distinct endpoints on the complementary side. Replacing each path by an edge gives pairwise vertex-disjoint prescribed edges. In the large-neighbourhood case, the set of vertices of degree less than $2a$ is independent. The asymmetric cut argument supplies the connectivity bound above; a family of vertex-disjoint two-edge stars then inserts the vertices of degree less than $a$, and a cycle-extension lemma absorbs the remaining low-degree vertices.

Section~\ref{sec:prelim} gives the notation and auxiliary results. Section~\ref{sec:small} proves the small-neighbourhood proposition. Section~\ref{sec:large} develops the asymmetric large-neighbourhood argument, and Section~\ref{sec:proof} combines the two branches.

\section{PRELIMINARIES}\label{sec:prelim}

This section defines the notation used throughout the paper and records the auxiliary results invoked later.

\subsection{Sets, graphs, and induced subgraphs}

All graphs are finite and simple. For a graph $H$, let $V(H)$ and $E(H)$ denote its vertex set and edge set. The order of $H$ is $|V(H)|$, where $|A|$ denotes the cardinality of a finite set $A$. If $A$ and $B$ are sets, then $A\setminus B$ is the set of elements of $A$ not contained in $B$. For a real number $x$, the symbols $\lfloor x\rfloor$ and $\lceil x\rceil$ denote the greatest integer at most $x$ and the least integer at least $x$, respectively.

For a set $U\subseteq V(H)$, let $H[U]$ be the subgraph of $H$ induced by $U$, and let
\[
H-U:=H[V(H)\setminus U].
\]
For a vertex $x\in V(H)$, write $H-x$ for $H-\{x\}$. Let $c(H)$ denote the number of connected components of $H$. A component is \emph{trivial} if it has one vertex and \emph{nontrivial} otherwise. When a component is used as a set, it means its vertex set; thus, for example, $|D|$ denotes the order of a component $D$, and $H[D]$ denotes the subgraph induced by its vertices. A set $X\subseteq V(H)$ is a \emph{cutset} or \emph{vertex cut} if $c(H-X)\ge2$. Two disjoint vertex sets are \emph{anticomplete} if no edge has one endpoint in each set.

\subsection{Paths, cycles, and forbidden induced subgraphs}

For an integer $k\ge1$, let $P_k$ be the path on $k$ vertices, and let $K_k$ be the complete graph on $k$ vertices. For $k\ge1$, the graph $K_{1,k}$ is the star with one centre and $k$ leaves. A graph of order $m$ is complete if it is isomorphic to $K_m$. A vertex set is independent if no two distinct vertices in it are adjacent. A graph is bipartite if its vertex set can be partitioned into two independent sets. More generally, a graph is complete multipartite if its vertex set can be partitioned into independent sets, called multipartition classes, such that every two vertices in different classes are adjacent.

If $F_1$ and $F_2$ are graphs, then $F_1\cup F_2$ denotes their vertex-disjoint union. For an integer $r\ge1$, the notation $rF_1$ denotes the vertex-disjoint union of $r$ copies of $F_1$. A graph $H$ is $F$-free if no induced subgraph of $H$ is isomorphic to $F$. A linear forest is a graph whose components are paths. In particular, $2P_2\cup P_1$ consists of two vertex-disjoint edges and an isolated vertex, while $P_2\cup P_1$ consists of an edge and an isolated vertex.

For a path $P$, an endpoint is a vertex incident with at most one edge of $P$; all other vertices of $P$ are internal vertices. A cycle $C$ of $H$ is \emph{Hamiltonian} if $V(C)=V(H)$. The graph $H$ is Hamiltonian if it contains a Hamilton cycle. A path $P$ of $H$ is a Hamiltonian $(x,y)$-path if its endpoints are $x$ and $y$ and $V(P)=V(H)$. The graph $H$ is Hamiltonian-connected if every pair of distinct vertices is joined by a Hamiltonian path.

\subsection{Neighbourhoods, degree, connectivity, and independence}

For distinct vertices $x,y\in V(H)$, write $x\sim y$ if $x$ and $y$ are adjacent, and write $x\nsim y$ otherwise. For $x\in V(H)$, the open neighbourhood $N_H(x)$ is the set of vertices adjacent to $x$, and the closed neighbourhood is $N_H[x]:=N_H(x)\cup\{x\}$. The degree of $x$ is
\[
d_H(x):=|N_H(x)|.
\]
For $U\subseteq V(H)$, define the external neighbourhood
\[
N_H(U):=\left(\bigcup_{x\in U}N_H(x)\right)\setminus U.
\]
For two vertices $x,y\in V(H)$, the set $N_H(x)\cup N_H(y)$ is called their \emph{joint neighbourhood}.
The minimum degree is $\delta(H):=\min\{d_H(x):x\in V(H)\}$. When the ambient graph is clear, we write $N(x)$, $N(U)$, $d(x)$, or $\deg(x)$ in place of $N_H(x)$, $N_H(U)$, or $d_H(x)$.

\begin{definition}[Vertex-connectivity]
If $H$ is complete, define $\kappa(H):=|V(H)|-1$. If $H$ is not complete, define $\kappa(H)$ to be the minimum cardinality of a vertex cut of $H$.
\end{definition}

\begin{definition}[Independence number]
A set $I\subseteq V(H)$ is independent if no two distinct vertices of $I$ are adjacent. The independence number $\alpha(H)$ is the maximum cardinality of an independent set in $H$.
\end{definition}

\subsection{Toughness and the scattering number}

\begin{definition}[Toughness]
Let $t>0$ be a real number. A graph $H$ is $t$-tough if
\[
|X|\ge t\,c(H-X)
\]
for every $X\subseteq V(H)$ with $c(H-X)\ge2$. A complete graph is declared to be $t$-tough for every $t>0$.
\end{definition}

\begin{definition}[Scattering number]
If $H$ is not complete, its scattering number is
\[
s(H):=\max\{c(H-X)-|X|:X\subseteq V(H),\ c(H-X)\ge2\}.
\]
If $H$ is complete, set $s(H):=-\infty$.
\end{definition}

\subsection{Matchings, star-matchings, and matched path-covers}

A \emph{matching} in $H$ is a set of pairwise vertex-disjoint edges. A \emph{star-matching} is a union of pairwise vertex-disjoint stars. For $k\ge1$, a $K_{1,k}$-matching is a star-matching in which every component is isomorphic to $K_{1,k}$; the vertex of degree $k$ in each component is its centre.

Let $U,W\subseteq V(H)$ be disjoint. A $W$-matched path-cover of $U$ is a finite set $\mathcal{Q}$ of pairwise vertex-disjoint paths in $H$ such that:
\begin{enumerate}
\item the sets $V(P)\cap U$, for $P\in\mathcal{Q}$, partition $U$;
\item for every $P\in\mathcal{Q}$, the set $V(P)\setminus U$ consists exactly of the two distinct endpoints of $P$, both in $W$;
\item the $2|\mathcal{Q}|$ endpoints of the paths are pairwise distinct.
\end{enumerate}

\subsection{Hamiltonicity tools}

We next state four results used later.

\begin{lemma}[Minimum-degree toughness lemma]\label{lem:degree}
Let $t>0$ be a real number and let $G$ be a $t$-tough finite simple graph on $n\ge 3$ vertices, where $t$-tough means that for every vertex subset $X$ of $G$ with $c(G-X)\ge 2$ one has $|X|\ge t\cdot c(G-X)$. Assume $\delta(G)>n/(t+1)-1$. Then $G$ is Hamiltonian.
\end{lemma}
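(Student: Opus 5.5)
The plan is to invoke this as the theorem of Bauer, Broersma, van den Heuvel and Veldman \cite{Bauer1995} without reproving it. For completeness, here is the outline of the argument I would follow, which is a longest-cycle argument pitting toughness against a degree count.

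\textbf{Step 1: reduction to a longest cycle.} If $G$ is complete, it is Hamiltonian since $n\ge3$. Otherwise, toughness gives $\kappa(G)\ge 2t$. In the range where $t\le 1/2$, the degree hypothesis $\delta>n/(t+1)-1\ge 2n/3-1$ already exceeds Dirac's bound, so $G$ is Hamiltonian. Thus I may assume $G$ is $2$-connected and non-Hamiltonian. Fix a longest cycle $C$ with a fixed orientation, and write $u^+$ for the successor of $u$ on $C$. Among all choices, pick $C$ and a component $H$ of $G-V(C)$ so that $|H|$ is minimal, or so that some vertex $x\in H$ has the most neighbours on $C$; the exact extremal choice is whatever the counting in Step 3 needs.

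\textbf{Step 2: the toughness side.} Let $N_C(H)=\{u_1,\dots,u_k\}$, listed in cyclic order. The standard longest-cycle exchanges give two facts. First, $H$ together with the successors $u_1^+,\dots,u_k^+$ forms a family of $k+1$ pairwise nonadjacent pieces. Second, no two of the $u_i^+$ are joined by a path avoiding $C$, and no $u_i^+$ lies in $N_C(H)$. Deleting a set $X$ built from $N_C(H)$ (augmented if necessary by neighbours of the successors) therefore leaves at least $k+1$ components. Toughness then bounds $k$, and in fact $|X|$, from below in terms of $t(k+1)$.

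\textbf{Step 3: the degree side, which is the main obstacle.} Take $x\in H$ and the vertices $u_1^+,\dots,u_k^+$. This is an independent set of size $k+1$ whose neighbourhoods on $C$ are almost disjoint: the longest-cycle crossing arguments show that $v\in N(u_i^+)$ and $v^{+}\in N(u_j^+)$ cannot both occur. Summing degrees, together with the bound $\delta>n/(t+1)-1$, then gives roughly $(k+1)(\delta+1)\le n$. This forces $k+1<t+1$, i.e. $k<t$, which contradicts Step 2.

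The delicate point is making the neighbourhood-disjointness count exact. One must handle the shifted pairs $(v,v^+)$ and the contribution of $H$ itself, and must choose $X$ in Step 2 so that $|X|$ and the number of components match the degree count precisely. I expect this to require the extremal choice of $C$ and $H$, and possibly a second independent set built from predecessors. Since all of this is done in \cite{Bauer1995}, in the paper I would simply cite that result.
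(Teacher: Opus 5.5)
Citing the theorem of Bauer, Broersma, van den Heuvel and Veldman is exactly what the paper does. It states the lemma with that attribution and gives no proof, so as a plan your proposal agrees with the paper. The sketch you add ``for completeness'', however, is not a correct outline of that theorem, and Step~3 fails outright.

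\textbf{Step 3 is false.} The crossing lemma only forbids shifted coincidences ($v\in N(u_i^+)$ together with $v^+\in N(u_j^+)$). It does not stop $u_i^+$ and $u_j^+$ from sharing neighbours, and the closed neighbourhoods of $x,u_1^+,\dots,u_k^+$ always overlap, since $u_i\in N(x)\cap N(u_i^+)$. The inequality $(k+1)(\delta+1)\le n$ therefore fails, by far more than a lower-order term, and no extremal choice of $C$ or $H$ rescues it:
\begin{itemize}
\item In $K_{m,m+1}$ with $m\ge 2$, every longest cycle misses one vertex $x$ of the larger class. Then $k=\delta=m$, and $(m+1)^2>2m+1=n$.
\item In the Petersen graph, which is $4/3$-tough, every longest cycle is a $9$-cycle missing one vertex $x$ with $k=\delta=3$, and $16>10$.
\end{itemize}
Since $k\ge\kappa(G)\ge 2t$ holds in every noncomplete $t$-tough graph, your Step~3 would even give Hamiltonicity under the weaker hypothesis $\delta>n/(2t+1)-1$. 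The Petersen graph refutes that, because $n/(2t+1)-1=19/11<3$. So toughness cannot enter only as a lower bound on $k$.

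\textbf{Where toughness actually enters.} It must bound $|C|$ directly. Put $X:=V(C)\setminus\{u_1^+,\dots,u_k^+\}$. The sets $H,u_1^+,\dots,u_k^+$ lie in distinct components of $G-X$, so $|C|-k\ge t(k+1)$, that is, $|C|\ge (t+1)k+t$. If $H=\{x\}$, then $k=d(x)\ge\delta$, and $n-1\ge|C|\ge(t+1)\delta+t$. This is precisely $\delta\le n/(t+1)-1$, the negation of the hypothesis.

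The case that needs further ideas is a component $H$ with more than one vertex, where $|N_C(H)|$ can lie well below $\delta$. Your sketch does not address it. Since the paper only cites the result, you should do the same: drop the sketch, or keep only the correct singleton case above.
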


Lemma~\ref{lem:degree} is the theorem of Bauer, Broersma, van den Heuvel, and Veldman \cite{Bauer1995}. The same statement appears as Lemma~2.2 in \cite{Shan2021}, as Lemma~4 in \cite{ShiShan2022}, and as the minimum-degree lemma in \cite{ShanTanyel2026}.

\begin{lemma}[Matching-cycle lemma]\label{lem:matching-cycle}
Let $H$ be a finite simple graph with at least three vertices, and let $L\subseteq E(H)$ be a matching. If $\kappa(H)\ge |L|+\alpha(H)$, then $H$ has a Hamilton cycle containing every edge of $L$. Here a matching is a finite set of pairwise vertex-disjoint edges of $H$; $\kappa(H)$ equals $|V(H)|-1$ if $H$ is complete and otherwise equals the minimum cardinality of a vertex cut of $H$; $\alpha(H)$ is the independence number of $H$; and a Hamilton cycle containing every edge of $L$ is a cycle of $H$ that visits every vertex of $H$ exactly once and includes every edge of $L$ as a cycle edge.
\end{lemma}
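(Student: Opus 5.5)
The plan is to adapt the Chv\'atal--Erd\H{o}s longest-cycle argument, with the edges of $L$ protected. Among all cycles of $H$ containing every edge of $L$, choose $C$ of maximum length. Two things must then be shown: that such a cycle exists, and that $V(C)=V(H)$. If $H$ is complete, a Hamilton cycle through a matching is easy to write down directly, so assume $H$ is not complete and $\kappa:=\kappa(H)\ge |L|+\alpha(H)$. Note $\alpha(H)\ge2$, so $\kappa\ge |L|+2$.

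\emph{Spanning step (assuming $C$ exists).} Fix an orientation of $C$ and write $x^+$ for the successor of $x$. Suppose $C$ is not Hamiltonian, and let $D$ be a component of $H-V(C)$. Let $x_1,\dots,x_m$ be the vertices of $C$ with a neighbour in $D$.
\begin{itemize}
\item Since $N(D)\cap V(C)$ separates $D$ from the rest of $C$ (or $|V(C)|\le \kappa$, which is handled separately), we have $m\ge\kappa$.
\item Discard every $x_i$ for which the cycle edge $x_ix_i^+$ lies in $L$. Each edge of $L$ has exactly one tail in the orientation, so at most $|L|$ indices are lost. At least $\kappa-|L|\ge\alpha(H)$ indices remain.
\item Pick $d\in D$. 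The remaining successors $x_i^+$, together with $d$, form a set of size $\alpha(H)+1$, and I claim it is independent.
\item First, $x_i^+\notin N(D)$: otherwise route $x_i\to D\to x_i^+$, which deletes only the edge $x_ix_i^+\notin L$ and gives a longer cycle still containing $L$.
\item Second, $x_i^+\nsim x_j^+$ for distinct surviving $i,j$: otherwise the usual crossing reroute $x_i\,D\,x_j\,\overleftarrow{C}\,x_i^+x_j^+\,\overrightarrow{C}\,x_i$ removes only $x_ix_i^+$ and $x_jx_j^+$, neither in $L$, and again lengthens $C$.
\end{itemize}
This contradicts the definition of $\alpha(H)$, so $C$ is Hamiltonian.

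\emph{Existence of a cycle through $L$ (main obstacle).} I would choose a cycle $C$ that contains the maximum number of edges of $L$, and subject to that has maximum length. Suppose some $uv\in L$ is not on $C$; by the spanning argument applied to the $L$-edges already on $C$, we may take $u,v$ off $C$.
\begin{itemize}
\item Apply Menger's theorem to obtain a fan from $\{u,v\}$ to $V(C)$ of $\kappa$ internally disjoint paths, with distinct ends on $C$ and internal vertices off $C$.
\item Discard the ends $y$ whose successor edge $yy^+$ lies in $L\cap E(C)$. At least $\kappa-|L|+1\ge\alpha(H)+1$ ends remain.
\item If two remaining fan paths, one starting at $u$ and one at $v$, end at consecutive vertices $y,y^+$ of $C$, replace the edge $yy^+$ by the detour through those paths and $uv$. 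This gains $uv$ and loses no $L$-edge.
\item Otherwise, run the same successor-independence argument as in the spanning step. Two fan ends $y_i,y_j$ whose successors are adjacent allow a crossing reroute that inserts the path $y_i\to u\,v\to y_j$, with $uv$ traversed, so $uv$ is added to $C$.
\item Ensuring that one fan end comes from $u$'s side and one from $v$'s side needs care. I would first try to arrange the fan so that both $u$ and $v$ are each the origin of at least one path, which is possible because $\kappa\ge2$.
\end{itemize}
In either case the number of $L$-edges on $C$ increases, contradicting the choice of $C$.

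The bookkeeping of which fan paths start at $u$ and which at $v$ in the reroute, so that $uv$ really is traversed, is where I expect the difficulty. If it fails, the fallback is to cite the known theorem that $\kappa\ge\alpha+|L|$ forces every matching $L$ onto a Hamilton cycle, which is the standard Chv\'atal--Erd\H{o}s extension for prescribed linear forests.
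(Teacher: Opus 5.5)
\paragraph{Verdict.} Your spanning step is sound, but the existence step has a genuine gap; with your fallback citation the proposal becomes complete and matches the paper.

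\paragraph{The spanning step.} This is the Chv\'atal--Erd\H{o}s argument with the edges of $L$ protected. Neither rerouting move deletes an $L$-edge, so the bound $\kappa(H)-|L|\ge\alpha(H)$ produces an independent set of size $\alpha(H)+1$, a contradiction.

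\paragraph{The gap in the existence step.} This step carries the real content of the lemma. Put $L':=L\cap E(C)$.
\begin{enumerate}
\item By maximality of $|L\cap E(C)|$, no cycle through $L'$ contains a further edge of $L$. So $C$ is simply a longest cycle through $L'$.
\item Since $\kappa(H)\ge\alpha(H)+|L'|$, your own spanning argument shows that $C$ is Hamiltonian.
\item Hence $u,v\in V(C)$ and $uv$ is a chord of $C$. This is the opposite of ``we may take $u,v$ off $C$'', and there is no fan from $\{u,v\}$ to $C$ with interior vertices off $C$.
\end{enumerate}
In the scenario you analysed, the $u$/$v$ bookkeeping you worry about would actually be harmless: a crossing reroute whose two fan paths start at the same end still gives a longer cycle through $L'$. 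But that scenario is vacuous. The configuration that must really be handled is a Hamilton cycle through $L'$ on which the $L$-edge $uv$ is a chord. That is a linkage problem, and no successor-independence argument along $C$ addresses it.

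\paragraph{What is missing.} You need the theorem of H\"aggkvist and Thomassen that $k$ independent edges of a $(k+1)$-connected graph lie on a common cycle. It applies here because a noncomplete $H$ satisfies $\kappa(H)\ge|L|+\alpha(H)\ge|L|+2$. Once a cycle through $L$ exists, your spanning step finishes the proof.

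\paragraph{Comparison with the paper.} The paper gives no argument of its own: it states that the lemma follows from the H\"aggkvist--Thomassen specified-edge theorem, which is exactly your fallback. With that citation your proposal is complete and agrees with the paper. Your spanning step also makes explicit that only the existence half needs to be imported. Without the citation, however, the self-contained fan argument you sketch does not prove the lemma.
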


Lemma~\ref{lem:matching-cycle} follows from the specified-edge cycle theorem of H\"aggkvist and Thomassen \cite{HaggkvistThomassen1982}.

\begin{lemma}[Cycle-extension lemma]\label{lem:cycle-ext}
Let $t>0$ be a real number and let $G$ be a $t$-tough finite simple graph on $n$ vertices, where $t$-tough means that for every vertex subset $X$ of $G$ with $c(G-X)\ge 2$ one has $|X|\ge t\cdot c(G-X)$. Let $C$ be a non-Hamiltonian cycle of $G$, and let $R$ be a connected subgraph of $G-V(C)$. If $|N_G(V(R))\cap V(C)|>n/(t+1)-1$, then there exists a cycle $C^*$ of $G$ such that $V(C)\subseteq V(C^*)$ and $V(C^*)\cap V(R)\neq\emptyset$.
\end{lemma}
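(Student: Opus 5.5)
The plan is the classical ``successor'' argument behind the Bauer--Broersma--van den Heuvel--Veldman theorem, which we run relative to $R$. Fix an orientation of $C$ and write $x^+$ for the successor of $x$ on $C$. Put $A:=N_G(V(R))\cap V(C)$, $k:=|A|$ and $A^+:=\{x^+:x\in A\}$, so that $|A^+|=k$. We will show that if no cycle $C^*$ with $V(C)\subseteq V(C^*)$ and $V(C^*)\cap V(R)\ne\emptyset$ exists, then $G$ has a cutset violating $t$-toughness.

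First we dispose of the degenerate case and show $k\ge1$. If $G$ is complete, then $k=|V(C)|\ge3$. If $G$ is not complete, then $t\le \kappa(G)/2\le (n-2)/2$. Hence $n/(t+1)-1\ge 1$, and the hypothesis gives $k\ge2$.

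Set $X:=V(C)\setminus A^+$. The key claim is that, under the assumption that no extension exists, the $k+1$ objects
\[
V(R),\quad x^+\ (x\in A)
\]
lie in pairwise distinct components of $G-X$. We prove this by exhibiting an extension whenever two of them are joined by a path $P$ in $G-X$. We may take $P$ with all internal vertices in $G-V(C)$; otherwise we shorten $P$ to the first vertex of $A^+\cup V(R)$ it meets after its start.
\begin{enumerate}
\item Suppose $P$ joins $x^+$ to $R$ (this includes the case $x^+\in A$). Then $R\cup P$ is a connected subgraph outside $C$ containing a neighbour of $x$ and a neighbour of $x^+$. So there is an $x$--$x^+$ path, internally in $R\cup P$, that starts inside $R$. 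Replacing the edge $xx^+$ of $C$ by this path gives the required $C^*$.
\item Suppose $P$ joins $x^+$ to $y^+$ with $x\ne y$. By the first case we may assume $P$ avoids $R$. Take an $x$--$y$ path $Q$ through $R$. The cycle $x\,Q\,y\,\overleftarrow{C}\,x^+\,P\,y^+\,\overrightarrow{C}\,x$ contains $V(C)$ and meets $R$; here the arrows indicate traversal against and along the orientation of $C$.
\end{enumerate}
The paths $Q$ and $P$ are disjoint because $P$ avoids $R$. The paths $Q$ and $P$ meet $C$ only in their endpoints, so the resulting subgraph is indeed a cycle.

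Given the claim, $c(G-X)\ge k+1\ge 2$. Moreover,
\[
|X|=|V(C)|-k\le n-|V(R)|-k\le n-k-1.
\]
Toughness then gives $n-k-1\ge t(k+1)$, that is, $k\le n/(t+1)-1$. This contradicts the hypothesis $|N_G(V(R))\cap V(C)|>n/(t+1)-1$.

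The main obstacle is the bookkeeping in the key claim. We must ensure that the connecting path $P$ can be chosen internally outside $C$ and disjoint from $R$ in the second case, which is handled by the reduction to the first case. We must also check that the rerouted object is a genuine cycle rather than a closed walk. Both points reduce to the shortening of $P$ and to the order in which the two cases are treated.
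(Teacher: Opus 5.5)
Your proof is correct. The paper does not prove this lemma itself. It imports the statement as Lemma~2.16 of Shan's $(P_4\cup P_1)$-free paper (also recorded by Shan and Tanyel), so there is no in-paper argument to compare against.

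What you wrote is the standard successor argument behind that lemma. Set $X=V(C)\setminus A^+$. If $V(R)$ and the $k$ successors do not lie in pairwise distinct components of $G-X$, you build an extension. Otherwise $c(G-X)\ge k+1\ge 2$, and toughness gives $k\le n/(t+1)-1$. Your case analysis, including the degenerate cases $y=x^+$ and $x=y^+$ in the second case, yields genuine cycles. The preliminary bound $k\ge 2$ for noncomplete $G$ is also right, via $\kappa(G)\ge 2t$. Including this argument would make the paper self-contained at the cost of half a page.

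Three small wording points should be tightened:
\begin{enumerate}
\item When you shorten $P$, start from an endpoint in $A^+$; every pair of your objects contains one. If you start inside $R$, the ``first vertex of $A^+\cup V(R)$'' could be another vertex of $R$.
\item In the first case, the connected subgraph lying outside $C$ is $R\cup(P-x^+)$, not $R\cup P$.
\item The bound $|X|\le n-k-1$ uses $V(R)\neq\emptyset$. Say explicitly that this is part of $R$ being connected.
\end{enumerate}
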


Lemma~\ref{lem:cycle-ext} is Lemma~2.16 of \cite{Shan2026P4}; the same vertex-insertion statement is recorded in \cite{ShanTanyel2026}. A specialised singleton form appears as Lemma~2 of \cite{SankaShan2024}.

\begin{lemma}[Star-matching lemma]\label{lem:star}
Let $G$ be a finite simple graph that is not complete. Assume $G$ is $10$-tough in the following sense: for every vertex subset $X$ of $G$ with $c(G-X)\ge 2$ one has $|X|\ge 10\cdot c(G-X)$. Let $I$ be an independent set of vertices of $G$. Then $G$ has a $K_{1,2}$-matching whose set of centres is exactly $I$. That is: there exist pairwise vertex-disjoint copies of $K_{1,2}$ in $G$, one for each vertex of $I$, such that the centre of the copy associated to a vertex $x\in I$ is $x$ itself, and the two leaves of that copy are distinct vertices of $V(G)\setminus I$. In particular the $2|I|$ leaves are pairwise distinct, and no vertex of $I$ is a leaf of the matching.
\end{lemma}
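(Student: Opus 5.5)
We reduce the statement to a Hall-type condition for a bipartite graph in which each vertex of $I$ needs two private partners. Let $B$ be the bipartite graph with sides $I$ and $W:=V(G)\setminus I$, where $x\in I$ and $w\in W$ are adjacent in $B$ exactly when $xw\in E(G)$. Edges of $G$ inside $W$ are discarded, and since $I$ is independent there are none inside $I$. A $K_{1,2}$-matching of $G$ with centre set exactly $I$ is the same thing as a system assigning to each $x\in I$ two neighbours in $W$ so that all $2|I|$ chosen vertices are distinct. Duplicate each vertex of $I$ into two copies, each joined to all of $N_B(x)$. Then the desired structure is a matching saturating the doubled side. By Hall's theorem it suffices to verify
\[
|N_G(A)|\ge 2|A| \qquad\text{for every nonempty } A\subseteq I,
\]
where $N_G(A)\subseteq W$ automatically because $I$ is independent.

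Fix a nonempty $A\subseteq I$ and put $X:=N_G(A)$. In $G-X$ every vertex of $A$ is isolated: its neighbours all lie in $X$, and $A$ is independent. The argument splits into three cases.

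First suppose $c(G-X)\ge 2$. Then $10$-toughness gives $|X|\ge 10\,c(G-X)\ge 10|A|\ge 2|A|$ when $|A|\ge2$. When $|A|=1$ we still get $|X|\ge 10\cdot2=20\ge2$. Either way the bound holds with room to spare.

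Next suppose $c(G-X)=1$ and $V(G)\setminus X\ne A$. Then $G-X$ has at least $|A|+1\ge 2$ components, because each vertex of $A$ is its own component and the remaining vertices form at least one more. This contradicts $c(G-X)=1$, so this case cannot occur.

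The remaining case is $V(G)\setminus X=A$ with $|A|=1$, say $A=\{x\}$, so that $x$ is adjacent to every other vertex. We need $d_G(x)\ge 2$. Since $G$ is not complete, $G$ has two nonadjacent vertices $y,z$, neither of which is $x$, so $n\ge 3$ and hence $d_G(x)=n-1\ge2$. The case $A=V(G)\setminus X$ with $|A|\ge2$ cannot occur either, because then $c(G-X)=|A|\ge2$, which falls under the first case.

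Most of the care goes into this boundary case of the Hall condition, together with the point that non-completeness rules out trivially small graphs. One must also check that leaves lie outside $I$; this is automatic because the doubled bipartite graph only uses $W$. In fact the argument only needs $1$-toughness together with $n\ge3$ handled as above, so the constant $10$ is far from tight here.
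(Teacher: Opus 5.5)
Your argument is correct, and it is a proof where the paper gives none: the paper imports the statement as the $K_{1,2}$ case of the Shan--Tanyel star-matching lemma. That lemma produces a $K_{1,\lfloor t\rfloor}$-matching with a prescribed independent centre set in any noncomplete $t$-tough graph.

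Your route is the standard one for lemmas of this kind: duplicate each centre, apply Hall's theorem, and verify $|N_G(A)|\ge 2|A|$ by applying toughness to $X=N_G(A)$. This works because every vertex of $A$ is an isolated component of $G-X$. Your case analysis is exhaustive. Since $A\neq\emptyset$, one always has $c(G-X)\ge 1$, and $c(G-X)=1$ forces $V(G)\setminus X=A=\{x\}$ with $x$ dominating. In that case non-completeness gives $d_G(x)=n-1\ge 2$.

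The citation buys generality, though your proof extends verbatim with $\lfloor t\rfloor$ copies of each centre. In the dominating case you would use $n-1\ge\delta(G)\ge 2t$, which holds for noncomplete $t$-tough graphs. Your version buys self-containment and shows exactly how much toughness the step consumes.

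On that point, your closing remark is wrong: $1$-toughness does not suffice. In your first case toughness only yields $|X|\ge t\,c(G-X)\ge t|A|$, so you need $t\ge 2$ to reach $2|A|$. For a counterexample, take $K_{m,m}$ with $m\ge 2$, which is noncomplete and $1$-tough, and let $I$ be one side. The $2m$ required leaves would then have to lie among the $m$ vertices of the other side. The correct observation is that your argument uses only $2$-toughness, together with non-completeness for the dominating-vertex case.
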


Lemma~\ref{lem:star} is the $K_{1,2}$ case of the star-matching lemma in \cite{ShanTanyel2026}, which supplies a $K_{1,\lfloor t\rfloor}$-matching of centres $I$ in every $t$-tough noncomplete graph (here $t=10$).

\subsection{An elementary degree bound}

The next lemma is the residual minimum-degree consequence of $10$-toughness. It is used only to force the graph into a range where $a=n/11$ is large.

\begin{lemma}[Minimum degree under $10$-toughness]\label{lem:delta20}
Let $G$ be a finite simple graph that is not complete. Assume $G$ is $10$-tough in the following sense: for every vertex subset $X$ of $G$ with $c(G-X)\ge 2$ one has $|X|\ge 10\cdot c(G-X)$. Then $\delta(G)\ge 20$.
\end{lemma}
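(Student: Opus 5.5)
Since $G$ is not complete, there are two distinct nonadjacent vertices $u,w$. We plan to take $X:=N_G(v)$ for a vertex $v$ of minimum degree and show that $X$ is a cutset. Then $10$-toughness will give $|X|\ge 10\,c(G-X)\ge 20$, that is, $\delta(G)=d_G(v)\ge 20$.

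The key step is to show that $G-N_G(v)$ has at least two components. In $G-N_G(v)$ the vertex $v$ is isolated, since all its neighbours have been deleted. So it suffices to show that $V(G)\setminus N_G[v]$ is nonempty. If this set is nonempty, then $\{v\}$ is one component and any vertex outside $N_G[v]$ lies in another, so $c(G-X)\ge 2$.

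This fails only if $N_G[v]=V(G)$, i.e.\ $v$ is adjacent to every other vertex. Then $d_G(v)=n-1$, so by minimality every vertex has degree $n-1$ and $G$ is complete, a contradiction. Hence $V(G)\setminus N_G[v]\neq\emptyset$, and so $c(G-N_G(v))\ge 2$. Applying the toughness hypothesis to $X=N_G(v)$ gives $d_G(v)=|X|\ge 10\cdot c(G-X)\ge 10\cdot 2=20$. The hypothesis is stated for every $X$ with $c(G-X)\ge2$, including $X=\emptyset$, and that case would only make the argument easier.

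Nothing here is a real obstacle. The one point needing care is that the minimum-degree vertex is not universal, and this is exactly where non-completeness enters. If $G$ were disconnected, $c(G-\emptyset)\ge2$ would already violate toughness, so this case does not arise, and the argument above does not need it anyway.
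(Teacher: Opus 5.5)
Your proposal is correct and follows essentially the same route as the paper: take $X=N_G(v)$ for a minimum-degree vertex $v$, use non-completeness to find a vertex outside $N_G[v]$ (so $\{v\}$ is one of at least two components of $G-X$), and apply $10$-toughness to get $\delta(G)=|X|\ge 20$.
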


\begin{proof}
Let $v$ be a vertex of $G$ with $d_G(v)=\delta(G)$, and write $N[v]:=N_G(v)\cup\{v\}$. Because $G$ is not complete, the closed neighbourhood $N[v]$ is a proper subset of $V(G)$: otherwise every vertex other than $v$ is adjacent to $v$, so $\delta(G)=|V(G)|-1$ and $G$ is complete. Hence there exists a vertex $w\in V(G)\setminus N[v]$. In particular $w\neq v$ and $vw$ is not an edge of $G$.

Set $X:=N_G(v)$. In the induced subgraph $G-X$ the vertex $v$ has no remaining neighbour, so the singleton $\{v\}$ is a connected component of $G-X$. The vertex $w$ also belongs to $V(G-X)$, and $w\neq v$, so $c(G-X)\ge 2$. Ten-toughness therefore gives $|X|\ge 20$. Hence $\delta(G)=|X|\ge 20$.
\end{proof}

\subsection{The two neighbourhood propositions}

The proof of Theorem~\ref{thm:main} splits at the joint-neighbourhood threshold $4n/11$. The two sides of that split are packaged in the next two propositions. Throughout both statements, open neighbourhoods and degrees are taken in the ambient graph $G$ unless a subscript indicates otherwise.

\begin{proposition}[Small-neighbourhood Proposition]\label{prop:small}
Let $G$ be a finite simple graph on $n\ge 3$ vertices with no induced subgraph isomorphic to $2P_2 \cup P_1$. Assume that $G$ is $10$-tough in the following sense: for every vertex subset $X$ of $G$ with $c(G-X)\ge 2$ one has $|X|\ge 10\cdot c(G-X)$. Write $a:=n/11$. Suppose there exist adjacent vertices $u$ and $v$ of $G$ such that $|N_G(u)\cup N_G(v)|\le 4n/11$. Set $S:=(N_G(u)\cup N_G(v))\setminus\{u,v\}$. Then the following assertions hold.
\begin{enumerate}
\item The identity $S\cup\{u,v\}=N_G(u)\cup N_G(v)$ holds, and $|S\cup\{u,v\}|\le 4n/11$. The set $S$ is a cutset of $G$. The induced subgraph $D_1:=G[\{u,v\}]$ is a connected component of $G-S$, and $c(G-S)=2$.
\item Writing $D_2$ for the unique connected component of $G-S$ distinct from $D_1$, one has $|V(D_2)|\ge 7a$. Moreover $D_1$ is a nontrivial component, and $D_2$ is $(P_2 \cup P_1)$-free.
\item Let $S_1:=\{x\in S : |N_G(x)\cap V(D_2)|<2a\}$ and let $G_1:=G[S_1\cup\{u,v\}]$. Then $G_1$ is $(P_2 \cup P_1)$-free.
\item Let $S_2:=S\setminus S_1$ and let $G_2:=G[S_2\cup V(D_2)]$. Then $G_1$ admits a $V(G_2)$-matched path-cover $Q$ in $G$. If $G_1$ is complete then $|Q|=1$, while if $G_1$ is not complete then $|Q|=\max\{1,s(G_1)\}$.
\item Writing $\kappa(G_2)$ for the vertex-connectivity of $G_2$, one has $\kappa(G_2)\ge 2a$. Let $L$ be a finite set of edges with both endpoints in $V(G_2)$ and with $|L|\le a$. Write $G_2^*$ for the graph obtained from $G_2$ by adding every edge of $L$. Then $\kappa(G_2^*)\ge\kappa(G_2)\ge 2a$ and $\alpha(G_2^*)\le\alpha(G)\le a$. In particular $\kappa(G_2^*)\ge |L|+\alpha(G_2^*)$.
\end{enumerate}
\end{proposition}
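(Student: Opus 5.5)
The five assertions will be proved in order, each resting on the previous ones. Two facts are used throughout: toughness applied to small cutsets, and the $(2P_2\cup P_1)$-freeness applied to the edge $uv$.

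\textbf{Assertion 1.} Because $uv$ is an edge, $u\in N(v)$ and $v\in N(u)$. Hence $S\cup\{u,v\}=N(u)\cup N(v)$, and its size is at most $4a$. The set $N(\{u,v\})$ is exactly $S$, so $\{u,v\}$ is a component of $G-S$. Since $n-|S|-2\ge 7a>0$, some vertex lies outside, and $S$ is a cutset. Toughness gives $|S|\ge 10\,c(G-S)$, so $c(G-S)\le 4a/10<a$.

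To get exactly two components, suppose $D_2,D_3$ are two components other than $D_1$. If either contains an edge $xy$, pick any vertex $z$ of the other. Then $uv$, $xy$ and $z$ induce $2P_2\cup P_1$, which is impossible. So every component other than $D_1$ is trivial, and all of them together form an independent set of size at least $n-4a\ge 7a$.

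This contradicts toughness: an independent set $I$ of size at least $7a$ with $|N(I)|\le 4a$ gives $|N(I)|\ge 10|I|$, which is absurd. More simply, I would first show $\alpha(G)\le a$. For an independent set $I$ with $|I|\ge 2$, the set $X=V\setminus I$ leaves $|I|$ components, so $n-|I|\ge 10|I|$ and $|I|\le a$. So there is exactly one other component, $D_2$, and $|D_2|\ge n-4a=7a$.

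\textbf{Assertion 2.} $D_1$ is nontrivial because it contains the edge $uv$. $D_2$ is $(P_2\cup P_1)$-free: an induced $P_2\cup P_1$ inside $D_2$, together with $uv$, would form an induced $2P_2\cup P_1$, since $D_2$ is anticomplete to $\{u,v\}$.

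\textbf{Assertion 3.} Suppose $G_1$ contains an induced $xy\cup z$. The aim is to find an edge $pq$ in $D_2$ with $p,q$ both nonadjacent to $x,y,z$. Each of $x,y,z$ has fewer than $2a$ neighbours in $D_2$, so at least $7a-6a=a$ vertices of $D_2$ are adjacent to none of them. Call this set $W$.

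If $W$ contains an edge, that edge completes an induced $2P_2\cup P_1$, a contradiction. If $W$ is independent, use $|W|\ge a$ together with $\alpha\le a$. Equality $|W|=a$ forces the bound to be tight, and the tight case has to be excluded separately. This is the delicate point: the equality case needs care, perhaps by a slightly sharper count or by using the bound $|D_2|\ge n-|S|-2$.

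\textbf{Assertion 4.} This is the main obstacle. For $G_1$ $(P_2\cup P_1)$-free, its complement is $(P_3\cup\overline{\phantom{x}})$-restricted, and $G_1$ is complete multipartite. Indeed, $P_2\cup P_1$-free means non-adjacency is transitive, i.e.\ $G_1$ is complete multipartite with independent classes $I_1,\dots,I_k$.

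To build the path-cover:
\begin{itemize}
\item If $G_1$ is complete, take one Hamilton path of $G_1$.
\item Otherwise, let the largest class have size $m$. Then $s(G_1)=m-(|G_1|-m)$ when this is at least $1$, and the minimum number of paths needed to cover a complete multipartite graph is $\max\{1,s\}$.
\end{itemize}
Each path then has two endpoint-vertices in $G_1$, and these must be extended to distinct neighbours in $V(G_2)$. Every vertex of $S_1$ has a neighbour in $D_2\cup S_2$; I need to check that $u,v$ never end paths, or choose paths so that their ends lie in $S_1$.

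I would apply Hall's theorem to the set of endpoints versus $V(G_2)$. A Hall violation would give a set $X\subseteq V(G_2)$ whose removal together with $G_1$-internals isolates many endpoints, contradicting $10$-toughness. The count $|Q|\le |G_1|\le 4a$ combined with toughness is what I expect to be the delicate part.

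\textbf{Assertion 5.} Let $T$ be a minimum cut of $G_2$. Then $T\cup S_1\cup\{u,v\}$ is not enough on its own, so I add $S_1$: every component of $G_2-T$ plus $D_1$ gives $c\ge 3$. Toughness then gives $|T|+|S_1|+\ldots\ge 30$, which is too weak. Instead, I use that the vertices of $S_2$ have at least $2a$ neighbours in $D_2$, and that $D_2$ is complete multipartite with classes of size at most $a$, so $\kappa(D_2)\ge 7a-a$. Together these give $\kappa(G_2)\ge 2a$.

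Adding edges cannot decrease $\kappa$ or increase $\alpha$, and $\alpha(G)\le a$. Hence $\kappa(G_2^*)\ge 2a\ge |L|+\alpha(G_2^*)$.
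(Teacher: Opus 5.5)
The genuine gap is assertion 4, which is the core of the proposition; your sketch does not establish it.

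\emph{Why the Hall plan fails.} You propose to fix a minimum path cover of $G_1$ first and then match its ends into $V(G_2)$ by Hall's theorem. This cannot work, because the ends cannot be chosen without looking outside $G_1$. The set $N_G(u)\subseteq S\cup\{v\}$ may miss $S_2$ entirely, and a vertex of $S_1$ may have no neighbour in $V(G_2)$ at all. Nor does a Hall deficiency contradict toughness. Let $B$ be a largest multipartition class of $G_1$ and $T:=V(G_1)\setminus B$.
\begin{itemize}
\item When $s(G_1)\ge1$, a cover with $s(G_1)=|B|-|T|$ paths forces every path to alternate and to have both ends in $B$. A one-vertex path then needs two private neighbours in $V(G_2)$.
\item For a deficient $Y\subseteq B$, the set isolating $Y$ is $N_G(Y)=T\cup(N_G(Y)\cap V(G_2))$, of order less than $|T|+2|Y|$.
\item Since $|T|$ can be far larger than $|Y|$, $10$-toughness yields no contradiction.
\end{itemize}

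\emph{What is missing.} You need a mechanism that couples the choice of ends with the path structure, and the paper uses a different one in each regime.
\begin{itemize}
\item If $G_1$ is complete or $s(G_1)\le-1$, then $G_1$ is Hamiltonian-connected: trivially if complete, and otherwise by Jung's theorem, since $G_1$ is $P_4$-free with negative scattering number. Take as ends the $G_1$-endpoints of a two-edge matching between $V(G_1)$ and $V(G_2)$; this matching exists because no single vertex separates $G$.
\item If $s(G_1)\ge1$, take a $K_{1,2}$-matching with centre set $B$ (Lemma~\ref{lem:star}). Put every centre having a leaf in $T$ into one alternating path through all of $T$; there are at most $|T|$ of them, padded to $|T|+1$. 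At least $|T|+1-\lfloor|T|/2\rfloor\ge2$ of these centres have a leaf in $V(G_2)$ and serve as the path's two ends. Every remaining centre is a two-edge star with both leaves in $V(G_2)$. This gives exactly $s(G_1)$ paths.
\item The balanced case $s(G_1)=0$ needs its own arguments:
\begin{itemize}
\item $\delta(G)\ge20$ handles $|V(G_1)|\le18$;
\item a detour through an edge inside $T$ works when such an edge exists;
\item when $G_1=K_{k,k}$, every Hamilton path has one end in $T$. A toughness count on $B\cup\{z\}$, with $z$ the outside neighbour of the end in $B$, then produces a vertex of $T$ with a neighbour in $V(G_2)\setminus\{z\}$.
\end{itemize}
\end{itemize}
Without something of this kind, the existence of a $V(G_2)$-matched path-cover with $|Q|=\max\{1,s(G_1)\}$ remains unproved.

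\emph{The other assertions.} Assertions 1, 2 and 5 are fine and follow the paper. For 5, your two ingredients are the paper's: if $|W|<2a$, then $V(D_2)\setminus W$ lies in one component of $G_2-W$, and any vertex of $S_2$ cut off from it has all of its at least $2a$ neighbours in $D_2$ inside $W$. In 3 there is no equality case to exclude. Each of $x,y,z$ has \emph{strictly} fewer than $2a$ neighbours in $D_2$ (and $u,v$ have none), so $|W|>7a-6a=a\ge\alpha(G)$ immediately.
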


\section{THE SMALL-NEIGHBOURHOOD CASE}\label{sec:small}

Assume that an edge $uv$ satisfies $|N_G(u)\cup N_G(v)|\le4n/11$, and write $a:=n/11$. We prove the three ingredients of Proposition~\ref{prop:small}: the two-component decomposition, a matched path-cover of the sparse side, and the connectivity bound needed after the paths are compressed. The balanced path-cover branch is handled separately using the minimum-degree consequence $\delta(G)\ge20$.

\subsection{Two-component structure}

The following proposition establishes items (1)--(3) of Proposition~\ref{prop:small}. Its numerical core is that the large component has order at least seven times the natural scale, whereas the three vertices of a hypothetical forbidden configuration have fewer than six times that scale in the component as a whole; the remaining vertices then exceed the permitted independence number.

\begin{proposition}[Small-edge neighbourhood structure]\label{prop:small-struct}
Let $G$ be a finite simple graph on $n\ge 3$ vertices with no induced subgraph isomorphic to $2P_2 \cup P_1$. Assume that $G$ is $10$-tough in the following sense: for every vertex subset $X$ of $G$ with $c(G-X)\ge 2$ one has $|X|\ge 10\cdot c(G-X)$. Write $a:=n/11$. Suppose there exist adjacent vertices $u$ and $v$ of $G$ such that $|N_G(u)\cup N_G(v)|\le 4n/11$. Set $S:=(N_G(u)\cup N_G(v))\setminus\{u,v\}$. Then the following three assertions hold.
\begin{enumerate}
\item The set $\{u,v\}$ is contained in $N_G(u)\cup N_G(v)$, the identity $S\cup\{u,v\}=N_G(u)\cup N_G(v)$ holds, $|S\cup\{u,v\}|\le 4n/11$, the set $S$ is a cutset of $G$, the induced subgraph $D_1:=G[\{u,v\}]$ is a connected component of $G-S$, and $c(G-S)=2$.
\item Writing $D_2$ for the unique connected component of $G-S$ distinct from $D_1$, one has $|V(D_2)|\ge 7a$. Moreover $D_1$ is a nontrivial component, and $D_2$ is $(P_2 \cup P_1)$-free.
\item Let $S_1:=\{x\in S : |N_G(x)\cap V(D_2)|<2a\}$ and let $G_1:=G[S_1\cup\{u,v\}]$. Then $G_1$ is $(P_2 \cup P_1)$-free.
\end{enumerate}
\end{proposition}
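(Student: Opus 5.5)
The plan is to first extract an independence bound from toughness, and then repeatedly use the forbidden $2P_2\cup P_1$ together with the edge $uv$, or with an edge of $G_1$, to force a large independent set.

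\textbf{Preliminary bound $\alpha(G)\le a$.} Let $I$ be independent with $|I|\ge2$. Deleting $X:=V(G)\setminus I$ leaves $|I|$ trivial components. Ten-toughness then gives $n-|I|\ge 10|I|$, so $|I|\le n/11=a$. The case $|I|\le 1$ is trivial, provided $a\ge1$; this holds in the relevant range since $G$ is not complete, and Lemma~\ref{lem:delta20} gives $n\ge 21$.

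\textbf{Item (1).} Since $u\sim v$, we have $v\in N_G(u)$ and $u\in N_G(v)$. Hence $\{u,v\}\subseteq N_G(u)\cup N_G(v)$ and the stated identity holds. In $G-S$ neither $u$ nor $v$ has a neighbour outside $\{u,v\}$, so $D_1=G[\{u,v\}]$ is a component. Moreover $V(G)\setminus(S\cup\{u,v\})$ has at least $n-4a=7a>0$ vertices, so $S$ is a cutset.

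To show $c(G-S)=2$, let $D_2,D_3,\dots$ be the other components. If some $D_i$ contains an edge $xy$ and some other $D_j$ ($j\ne i$, $j\ne1$) contains a vertex $z$, then $\{u,v,x,y,z\}$ induces $2P_2\cup P_1$, which is impossible. So either there is exactly one further component, or all further components are singletons. In the latter case those singletons, together with $u$, form an independent set of size at least $7a+1>a$, contradicting $\alpha(G)\le a$. Hence $c(G-S)=2$.

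\textbf{Item (2).} We get $|V(D_2)|=n-|S\cup\{u,v\}|\ge 7a$, and $D_1$ is nontrivial because it contains the edge $uv$. If $D_2$ contained an induced $P_2\cup P_1$, say an edge $xy$ and a vertex $z$ adjacent to neither, then adding $uv$ would give an induced $2P_2\cup P_1$, since $D_1$ and $D_2$ are anticomplete. So $D_2$ is $(P_2\cup P_1)$-free.

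\textbf{Item (3).} This is the step I expect to be the main obstacle, and it is where the threshold $2a$ in $S_1$ is used. Suppose $G_1$ contains an edge $xy$ and a vertex $z$ with $z\nsim x$ and $z\nsim y$, all in $S_1\cup\{u,v\}$. Each of $x,y,z$ has fewer than $2a$ neighbours in $D_2$: vertices of $S_1$ by definition, and $u,v$ because they have none. Let $W$ be the set of vertices of $D_2$ adjacent to none of $x,y,z$. Then
\[
|W|>7a-6a=a.
\]
If $W$ contained an edge $pq$, then $\{x,y,p,q,z\}$ would induce $2P_2\cup P_1$. To see this, note that $x,y,z\in S\cup\{u,v\}$ are disjoint from $D_2$, that $xy$ and $pq$ are edges, and that all other pairs are nonadjacent by the choice of $z$ and of $W$. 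Hence $W$ is independent of size greater than $a$, contradicting $\alpha(G)\le a$.

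The delicate point is confirming that the strict inequalities combine correctly: we need $|D_2|\ge 7a$ and each of the three vertices to have fewer than $2a$ neighbours in $D_2$, which together give $|W|>a$ strictly.
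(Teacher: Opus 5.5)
Your proposal is correct and follows essentially the same route as the paper: you get $\alpha(G)\le a$ from toughness, use the edge $uv$ plus an edge plus a vertex to limit the components of $G-S$, and for item (3) count $|W|>7a-6a=a$ to force an edge in $W$. The deviations are minor. In the all-singletons case you invoke $\alpha(G)\le a$ where the paper applies toughness directly to $S$. You should also state explicitly that $G$ is noncomplete, since $|N_G(u)\cup N_G(v)|\le 4n/11<n$; then every maximum independent set has size at least $2$, and the appeal to Lemma~\ref{lem:delta20} is unnecessary.
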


\begin{proof}
We work throughout with the finite simple graph $G$ in the statement. Put $a:=n/11$, choose adjacent $u,v$ with $|N_G(u)\cup N_G(v)|\le4n/11$, and set
\[
S:=(N_G(u)\cup N_G(v))\setminus\{u,v\}.
\]
Here all neighbourhoods are open.

Because $uv$ is an edge, $v\in N_G(u)$ and $u\in N_G(v)$. Hence
\[
S\cup\{u,v\}=N_G(u)\cup N_G(v),\qquad |S|\le4n/11-2.
\]
Also $|N_G(u)\cup N_G(v)|\ge2$, so $n\ge6$. The graph $G$ is not complete: otherwise $N_G(u)\cup N_G(v)=V(G)$ and $n\le4n/11$, impossible.

Since $G$ is noncomplete, choose a maximum independent set $I$. Then $|I|\ge2$, and for $X:=V(G)\setminus I$ one has $c(G-X)=|I|$. Toughness gives
\[
n-|I|=|X|\ge10|I|,
\]
so $\alpha(G)=|I|\le a$.

We next prove the component statement. The edge $uv$ makes $D_1:=G[\{u,v\}]$ connected. Neither $u$ nor $v$ has a neighbour outside $S\cup\{u,v\}$, because all their neighbours lie in $N_G(u)\cup N_G(v)=S\cup\{u,v\}$. Thus $D_1$ is a component of $G-S$. The set $V(G)\setminus(S\cup\{u,v\})$ is nonempty because $|S\cup\{u,v\}|\le4n/11<n$, so $S$ is a cutset.

The component $D_1$ is nontrivial. There are at most two nontrivial components of $G-S$: if $D',D'',D'''$ were three, choose an edge from each of the first two and a vertex from the third. Distinct components are anticomplete, so these five vertices induce $2P_2\cup P_1$, a contradiction.

Suppose that a nontrivial component $D^*$ distinct from $D_1$ exists. A third component $D^{**}$ would again yield an induced $2P_2\cup P_1$ by taking the edge $uv$, an edge in $D^*$, and a vertex of $D^{**}$. Thus $G-S$ has exactly two components.

It remains to rule out the case in which all components other than $D_1$ are trivial. Then
\[
c(G-S)=1+(n-|S|-2)=n-|S|-1\ge7n/11+1.
\]
Toughness would imply
\[
|S|\ge10c(G-S)\ge70n/11+10,
\]
whereas $|S|\le4n/11-2$, forcing $-6n\ge12$, impossible. Hence the second component $D_2$ is nontrivial and is the unique component other than $D_1$.

The order bound follows immediately:
\[
\begin{aligned}
|V(D_2)|&=n-|S\cup\{u,v\}|\\
&=n-|N_G(u)\cup N_G(v)|\\
&\ge n-4n/11=7n/11=7a.
\end{aligned}
\]
If $D_2$ contained an induced $P_2\cup P_1$ on vertices $x,y,z$, then the edge $uv$ in $D_1$ together with $x,y,z$ would induce $2P_2\cup P_1$, since distinct components of $G-S$ are anticomplete. Thus $D_2$ is $(P_2\cup P_1)$-free.

Finally, set
\[
S_1:=\{x\in S:|N_G(x)\cap V(D_2)|<2a\},\qquad
G_1:=G[S_1\cup\{u,v\}].
\]
If $|V(G_1)|\le2$, the conclusion is immediate. Otherwise suppose that $G_1$ contains an induced $P_2\cup P_1$ on vertices $x,y,z$. These vertices lie outside $D_2$. Let $N(\{x,y,z\})$ denote their open neighbourhood in $G$. Neither $u$ nor $v$ has a neighbour in $D_2$, and each vertex of $S_1$ has fewer than $2a$ neighbours in $D_2$, so
\[
|N(\{x,y,z\})\cap V(D_2)|
\le\sum_{w\in\{x,y,z\}\cap S_1}|N_G(w)\cap V(D_2)|
<6a.
\]
Let $U:=V(D_2)\setminus N(\{x,y,z\})$. If $U$ contained an edge $pq$, then the edge in the induced $P_2\cup P_1$ on $\{x,y,z\}$, the edge $pq$, and the remaining isolated vertex would induce $2P_2\cup P_1$ in $G$. Hence $U$ is independent in $D_2$, and
\[
|U|\le\alpha(D_2)\le\alpha(G)\le a.
\]
On the other hand,
\[
|U|=|V(D_2)|-|N(\{x,y,z\})\cap V(D_2)|
>7a-6a=a,
\]
a contradiction. Therefore $G_1$ is $(P_2\cup P_1)$-free.
\end{proof}

\subsection{Matched path-cover}

We next prove the path-cover assertion. The negative-scattering case is handled by Jung's Hamiltonian-connectedness theorem for $P_4$-free graphs, while the balanced case uses the degree bound $\delta(G)\ge20$ and an explicit toughness argument.

\begin{proposition}[Matched path-cover with distinct outside ends]\label{prop:pathcover}
Let $G$ be a finite simple graph on $n\ge 3$ vertices with no induced subgraph isomorphic to $2P_2 \cup P_1$. Assume that $G$ is $10$-tough. Write $a:=n/11$. Suppose there exist adjacent vertices $u$ and $v$ of $G$ such that $|N_G(u)\cup N_G(v)|\le 4n/11$. Let $S$, $D_1$, $D_2$, $S_1$, and $G_1$ be as in Proposition~\ref{prop:small-struct}. Let $S_2:=S\setminus S_1$ and let $G_{\mathrm{right}}:=G[S_2\cup V(D_2)]$. If $G_1$ is not complete, write $s(G_1)$ for the scattering number of $G_1$. Then $G_1$ admits a $V(G_{\mathrm{right}})$-matched path-cover $Q$ in $G$. If $G_1$ is complete then $|Q|=1$, while if $G_1$ is not complete then $|Q|=\max\{1,s(G_1)\}$.
\end{proposition}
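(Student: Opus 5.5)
The plan is to build the path-cover in two stages: first cover $V(G_1)$ by vertex-disjoint paths inside $G_1$, then attach to each path two distinct outside ends in $S_2\cup V(D_2)$, using the high-degree structure on the right side.

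\emph{Stage one: the internal cover.} Since $G_1$ is $(P_2\cup P_1)$-free by Proposition~\ref{prop:small-struct}(3), it is in particular $P_4$-free (an induced $P_4$ contains an induced $P_2\cup P_1$). The case split is as follows.
\begin{itemize}
\item If $G_1$ is complete, or $s(G_1)\le 0$, then $G_1$ is $1$-tough. By Jung's theorem for $P_4$-free graphs, a $1$-tough $P_4$-free graph on at least three vertices is Hamiltonian. I will use the stronger form (Hamiltonian-connected when the scattering number is negative) only where needed. In this case I take a single Hamiltonian path of $G_1$; the case $|V(G_1)|\le 2$ is trivial since $uv$ is an edge.
\item If $s(G_1)=s\ge1$, I use Jung's path-cover form for cographs: the minimum number of paths covering a $P_4$-free noncomplete graph equals $\max\{1,s\}$. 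So $G_1$ has a path cover by exactly $s$ paths, and no fewer. A witness set $X$ with $c(G_1-X)-|X|=s$ shows that at least $s$ paths are necessary in any cover, even after outside ends are added, because the outside ends lie outside $G_1$.
\end{itemize}

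\emph{Stage two: attaching the ends.} Each path has two endpoints, possibly equal for a one-vertex path, lying in $S_1\cup\{u,v\}$. Vertices $u,v$ have no neighbours in $D_2$, but they are adjacent to every vertex of $S\setminus S_1$ lying in their neighbourhood; indeed $u,v$ are adjacent to all of $S$ by definition of $S$, at least one of them. So I will arrange the cover so that $u$ or $v$ is not an endpoint, or is an endpoint joined to some $S_2$-vertex.

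Each endpoint $x\in S_1$ needs a private neighbour in $S_2\cup V(D_2)$. I intend a Hall-type argument on the bipartite graph between the $2|Q|$ endpoint slots and $V(G_{\mathrm{right}})$. If Hall fails, a set $T$ of endpoints has joint right-neighbourhood $Y$ with $|Y|<|T|\le 2|Q|$. Deleting $Y$ together with suitable vertices of $G_1$ then separates many components, contradicting $10$-toughness. Here $\delta(G)\ge20$ (Lemma~\ref{lem:delta20}) guarantees that each endpoint has at least $20-|V(G_1)|$-many outside neighbours when $G_1$ is small. This is the balanced branch announced in the section introduction.

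The main obstacle is this Hall step when $|Q|=s(G_1)$ is large. There, the toughness cut must combine the scattering witness $X$ of $G_1$ with the deficient neighbourhood $Y$. I would try $X\cup Y\cup(S\setminus S_1)$ as the cut. I would also use $|S|\le 4a-2$ to show that the number of resulting components times $10$ exceeds the cut size. Since $|S|\le4a$ while components number at least $s+1$, toughness gives $s\le 4a/10$. This bounds $2|Q|$ well below $|S_2\cup V(D_2)|\ge7a$, and the remaining task is making the matching of ends disjoint, which Hall via toughness should supply.
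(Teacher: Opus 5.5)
Your proof has a genuine gap in Stage two. You fix the internal cover first and then try to give its endpoints private neighbours in $V(G_{\mathrm{right}})$ by Hall plus toughness. But toughness gives no lower bound on $|N_G(x)\cap V(G_{\mathrm{right}})|$ for an individual vertex $x$ of $G_1$. The vertices $u,v$ have no neighbour in $D_2$ and need not have one in $S_2$, and nothing you have available stops a vertex of $S_1$ from having all its neighbours inside $G_1$.

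If the cover from Jung's theorem ends a path at such an $x$, Hall already fails for $\{x\}$ with $Y=\emptyset$. Deleting $N_G(x)$ isolates only $x$, so toughness yields nothing beyond $d_G(x)\ge 20$. Your plan to ``arrange'' that $u,v$ are not endpoints has no mechanism behind it, and it would have to cover every such vertex.

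The branch $s(G_1)=0$ fails in sharper form. There $G_1$ need not be Hamiltonian-connected (e.g.\ $G_1=K_{k,k}$), so one end must lie in each side. Finding an end on the second side with an outside neighbour different from the first end's needs its own toughness cut.

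Separately, the cut $X\cup Y\cup(S\setminus S_1)$ does not separate the components of $G_1-X$ in $G$. Vertices of $S_1$ may have many (fewer than $2a$) neighbours in $D_2$. So the bound $s\le 4a/10$ is unsupported, though it is not needed.

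The missing idea is to choose the paths and their outside ends together, using the structure of $G_1$. Being $(P_2\cup P_1)$-free, $G_1$ is complete multipartite. For $s(G_1)\ge 0$ let $T$ be a minimum cutset and $B=V(G_1)\setminus T$; then $B$ is independent and $s(G_1)=|B|-|T|$. In any cover by $s(G_1)\ge 1$ paths, every path alternates $B,T,\dots,B$, so the only freedom is which $B$-vertices serve as ends. The paper exploits exactly this:
\begin{enumerate}
\item Lemma~\ref{lem:star} gives a $K_{1,2}$-matching $M$ of $G$ with centre set $B$, whose leaves lie in $T\cup V(G_{\mathrm{right}})$.
\item Put every centre with an $M$-leaf in $T$ (at most $|T|$ of them) into a set $U^*\subseteq B$ of size $|T|+1$. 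At most $\lfloor |T|/2\rfloor$ centres have both leaves in $T$, so at least two centres of $U^*$ have a leaf in $V(G_{\mathrm{right}})$.
\item Thread one alternating path through $T\cup U^*$ ending at two such centres.
\item Use every other centre, with its two leaves (both in $V(G_{\mathrm{right}})$), as a further path.
\end{enumerate}
This gives exactly $s(G_1)$ paths with distinct outside ends.

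When $|T|=|B|$, the paper uses $\delta(G)\ge 20$ if $|V(G_1)|\le 18$, and an edge inside $T$ as a detour if one exists. Otherwise it deletes $B\cup\{z\}$ and applies toughness to find a $T$-vertex with an outside neighbour $z^*\ne z$.

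In the complete and $s(G_1)\le -1$ cases, Hamiltonian-connectedness lets the ends be the $G_1$-ends of two disjoint cross-edges, whose existence follows from toughness.
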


\begin{proof}
We use the notation of the statement. Proposition~\ref{prop:small-struct} gives that $G$ is noncomplete, $c(G-S)=2$, $|V(D_2)|\ge7a$, that $D_2$ and $G_1$ are $(P_2\cup P_1)$-free, that $\{u,v\}\subseteq V(G_1)$, and that $V(G)$ is the disjoint union of $V(G_1)$ and $V(G_{\mathrm{right}})$. In particular, $|V(G_1)|\ge2$.

We first record the degree and outside-size bounds used below. Let $x$ be a minimum-degree vertex of $G$ and put $X=N_G(x)$. Since $G$ is noncomplete, $G-X$ has at least two components, one containing $x$ and one containing a vertex outside $N_G[x]$ or a component of $(G-x)-X$. Thus 10-toughness gives $\delta(G)=|X|\ge20$. Moreover $n\ge6$ and $|V(G_{\mathrm{right}})|\ge|V(D_2)|\ge7a>3$, so $|V(G_{\mathrm{right}})|\ge4$.

The graph $G_1$ is complete multipartite. To see this, observe that nonadjacency is transitive: if $p\not\sim q$ and $q\not\sim r$ but $p\sim r$, then $q$ is isolated from the edge $pr$, producing an induced $P_2\cup P_1$. Hence the nonadjacency classes are independent and vertices from different classes are adjacent.

There is a matching of size two in the bipartite graph of cross-edges between $V(G_1)$ and $V(G_{\mathrm{right}})$. If there were no cross-edge, $G$ would be disconnected, which is impossible for a 10-tough graph. If all cross-edges shared one endpoint $r$, then after deleting $r$ the two nonempty sides would remain disconnected: $|V(G_1)|\ge2$ and $|V(G_{\mathrm{right}})|\ge4$. This would give $c(G-\{r\})\ge2$ and $1\ge20$, a contradiction. Fix distinct $x,y\in V(G_1)$ and distinct $z,w\in V(G_{\mathrm{right}})$ with $xz,yw\in E(G)$.

\medskip
\noindent\textbf{Step 1: complete or negatively scattering.}
If $G_1$ is complete, it is Hamiltonian-connected. A Hamiltonian $(x,y)$-path $P$ therefore gives the required one-path cover $z\,x\,P\,y\,w$.

Assume that $G_1$ is not complete and $s(G_1)\le-1$. Since $G_1$ is $(P_2\cup P_1)$-free, it is $P_4$-free: an induced path $P_4=p_1p_2p_3p_4$ would contain an induced subgraph on $\{p_1,p_2,p_4\}$ isomorphic to $P_2\cup P_1$. Moreover $s(G_1)<0$. Jung's theorem therefore implies that $G_1$ is Hamiltonian-connected \cite{Jung1978}. A Hamiltonian $(x,y)$-path $P$ again yields the single matched path $z\,x\,P\,y\,w$. Thus $|Q|=1=\max\{1,s(G_1)\}$.

\medskip
\noindent\textbf{Step 2: nonnegative scattering.}
Assume that $G_1$ is noncomplete and $s(G_1)\ge0$. Let $T$ be a minimum cutset of $G_1$, put $B:=V(G_1)\setminus T$, $k:=|B|$, and $n_1:=|V(G_1)|$.

We record the cut facts explicitly. Every component of $G_1-T$ is trivial, so $B$ is independent and $c(G_1-T)=k$. Every vertex of $T$ is adjacent to every vertex of $B$, so the complete bipartite graph with parts $(T,B)$ is a spanning subgraph of $G_1$. In a complete multipartite graph, $\kappa(G_1)=\delta(G_1)$ and $\delta(G_1)\ge n_1-\alpha(G_1)$. The set $T$ is nonempty because $uv$ is an edge. A largest multipartition class $I$ has size $\alpha(G_1)$ and its complement is a cutset, so
\[
|T|=\kappa(G_1)=n_1-\alpha(G_1),\qquad k=\alpha(G_1).
\]
For every cutset $X$, $G_1-X$ is edgeless and has at most $\alpha(G_1)$ vertices, hence
\[
s(G_1)=2\alpha(G_1)-n_1=k-|T|.
\]
Thus $|T|\le k$.

By Lemma~\ref{lem:star}, the independent set $B$ is the centre set of a $K_{1,2}$-matching $M$ in $G$. Its $2k$ leaves are pairwise distinct, none lies in $B$, and every leaf lies in $T\cup V(G_{\mathrm{right}})$ because $V(G)=T\cup B\cup V(G_{\mathrm{right}})$.

\medskip
\noindent\textbf{Step 3: the unbalanced branch $|T|<k$.}
Here $s(G_1)=k-|T|\ge1$. Let $r$ be the number of centres of $M$ whose two leaves lie in $T$. Then $2r\le|T|$, so $r\le\lfloor|T|/2\rfloor$. Let $U\subseteq B$ be the centres with at least one leaf in $T$. Since leaves are distinct, $|U|\le|T|$. Extend $U$ to $U^*\subseteq B$ with $|U^*|=|T|+1$.

At most $\lfloor|T|/2\rfloor$ centres of $U^*$ have both leaves in $T$. Consequently at least
\[
|T|+1-\left\lfloor\frac{|T|}{2}\right\rfloor\ge2
\]
centres $x,y\in U^*$ have an $M$-leaf in $V(G_{\mathrm{right}}).$ Choose distinct such leaves $z,w$. Enumerate $T=\{t_1,\ldots,t_m\}$ with $m=|T|$ and $U^*\setminus\{x,y\}=\{b_1,\ldots,b_{m-1}\}$. The alternating walk
\[
x\,t_1\,b_1\,t_2\,b_2\,\cdots\,t_{m-1}\,b_{m-1}\,t_m\,y
\]
is a Hamiltonian $(x,y)$-path of the complete bipartite spanning subgraph on $T\cup U^*$. Joining it to $z$ and $w$ gives one matched path. Every centre in $B\setminus U^*$ has both leaves in $V(G_{\mathrm{right}})$, so its two-edge star is another matched path. The resulting collection has
\[
1+(k-|U^*|)=k-|T|=s(G_1)
\]
paths, covers $T\cup B=V(G_1)$, and has pairwise distinct outside endpoints.

\medskip
\noindent\textbf{Step 4: the balanced branch $|T|=k$.}
Now $n_1=2k$ and $s(G_1)=0$. If $n_1\le19$, then for each $q\in V(G_1)$,
\[
d_{G_{\mathrm{right}}}(q)\ge\delta(G)-(n_1-1)\ge21-n_1\ge2.
\]
Choose $x\in T$ and $y\in B$, choose a neighbour $z$ of $x$ in $G_{\mathrm{right}}$, and choose a neighbour $w\ne z$ of $y$ in $G_{\mathrm{right}}$. The balanced complete bipartite graph on $(T,B)$ has a Hamiltonian $(x,y)$-path, so $z\,x\,P\,y\,w$ is one matched path covering $G_1$. Since $n_1$ is even, this case is exactly $n_1\le18$.

Assume now $n_1\ge20$, so $k\ge10$. At most $\lfloor k/2\rfloor$ centres have both leaves in $T$, so at least $\lceil k/2\rceil\ge5$ centres have an outside leaf. Choose two such centres $x,y\in B$ with distinct outside leaves $z,w$.

If $T$ contains an edge $t_1t_2$, use the complete bipartite edges together with $t_1t_2$ to form an alternating Hamiltonian $(x,y)$-path that traverses the detour $t_1t_2$ once. This gives a single matched path covering $V(G_1)$.

If $T$ is independent, then $G_1$ is the complete bipartite graph with parts $(T,B)$. We claim that some $x^*\in T$ has a neighbour $z^*\in V(G_{\mathrm{right}})\setminus\{z\}$. Suppose not. Then
\[
N_G(T)\cap V(G_{\mathrm{right}})\subseteq\{z\}.
\]
Delete $X:=B\cup\{z\}$. Every vertex of $T$ is isolated in $G-X$: its neighbours in $G_1$ all lie in $B$, and by assumption it has no neighbour in $G_{\mathrm{right}}\setminus\{z\}$. Since $|V(G_{\mathrm{right}})|\ge4$, the set $V(G_{\mathrm{right}})\setminus\{z\}$ is nonempty. Hence
\[
c(G-X)\ge k+1,\qquad |X|=k+1.
\]
Ten-toughness would give $k+1=|X|\ge10c(G-X)\ge10(k+1)$, a contradiction. The claimed edge $x^*z^*$ therefore exists. The complete bipartite graph $G_1$ has a Hamiltonian $(x,x^*)$-path $P$, and $z\,x\,P\,x^*\,z^*$ is the required one-path cover. This exhausts all cases and proves the proposition.
\end{proof}

\subsection{Connectivity after adding the endpoint matching}

\begin{proposition}[Connectivity of the large side]\label{prop:kappa-small}
Let $G$ be a finite simple graph on $n\ge 3$ vertices with no induced subgraph isomorphic to $2P_2 \cup P_1$. Assume that $G$ is $10$-tough. Write $a:=n/11$. Suppose there exist adjacent vertices $u$ and $v$ of $G$ such that $|N_G(u)\cup N_G(v)|\le 4n/11$. Set $S:=(N_G(u)\cup N_G(v))\setminus\{u,v\}$. Let $D_1:=G[\{u,v\}]$ and let $D_2$ be the unique connected component of $G-S$ distinct from $D_1$, as furnished by Proposition~\ref{prop:small-struct}. Let $S_1:=\{x\in S : |N_G(x)\cap V(D_2)|<2a\}$, write $S_2:=S\setminus S_1$, and write $G_2:=G[S_2\cup V(D_2)]$. Then $\kappa(G_2)\ge 2a$. Moreover, if $L$ is a finite set of edges with both endpoints in $V(G_2)$ and $|L|\le a$, and if $G_2^*$ is obtained from $G_2$ by adding every edge of $L$, then $\kappa(G_2^*)\ge\kappa(G_2)\ge 2a$ and $\alpha(G_2^*)\le\alpha(G)\le a$. In particular $\kappa(G_2^*)\ge |L|+\alpha(G_2^*)$.
\end{proposition}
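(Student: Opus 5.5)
The plan is to argue directly from a hypothetical small cutset of $G_2$, using the structure of $D_2$ recorded in Proposition~\ref{prop:small-struct}, and then to handle the added edges $L$ by monotonicity.

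\textbf{Step 1: $D_2$ is complete multipartite with small classes.} By Proposition~\ref{prop:small-struct}, $D_2$ is $(P_2\cup P_1)$-free. The transitivity-of-nonadjacency argument used in the proof of Proposition~\ref{prop:pathcover} then shows that $D_2$ is complete multipartite. As in the proof of Proposition~\ref{prop:small-struct}, $10$-toughness applied to the complement of a maximum independent set gives $\alpha(G)\le a$. Hence every multipartition class of $D_2$ has at most $a$ vertices. We also have $|V(D_2)|\ge 7a$.

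\textbf{Step 2: a small cutset cannot split $G_2$.} If $G_2$ is complete, then $\kappa(G_2)=|V(G_2)|-1\ge 7a-1\ge 2a$, since $n\ge 6$ forces $a>1/5$. Otherwise, suppose $X$ is a vertex cut of $G_2$ with $|X|<2a$. The set $V(D_2)\setminus X$ has more than $5a$ vertices. Because each class has at most $a$ vertices, these remaining vertices meet at least two multipartition classes. In a complete multipartite graph, any vertex set meeting two classes induces a connected subgraph, so $D_2-X$ is connected and lies in a single component $K$ of $G_2-X$.

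Now take any remaining vertex $s\in S_2\setminus X$. It has at least $2a$ neighbours in $D_2$, and fewer than $2a$ of them lie in $X$. So $s$ has a neighbour in $D_2-X$ and therefore lies in $K$. Thus $G_2-X$ is connected, contradicting the choice of $X$. This proves $\kappa(G_2)\ge 2a$.

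\textbf{Step 3: adding the edges of $L$.} First, $\kappa(G_2^*)\ge\kappa(G_2)$. If $G_2^*$ is not complete, every vertex cut of $G_2^*$ is also a vertex cut of $G_2$, because deleting edges cannot merge components. If $G_2^*$ is complete, then $\kappa(G_2^*)=|V(G_2)|-1\ge\kappa(G_2)$.

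Second, adding edges cannot enlarge independent sets, so $\alpha(G_2^*)\le\alpha(G_2)\le\alpha(G)\le a$. Combining, $\kappa(G_2^*)\ge 2a\ge |L|+\alpha(G_2^*)$.

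The only delicate point is Step 2. One must see that deleting fewer than $2a$ vertices leaves $D_2$ connected, which needs $7a-2a>a$ so that two classes survive. One must also see that the threshold $2a$ in the definition of $S_1$ is exactly what reattaches every surviving vertex of $S_2$ to $D_2-X$.
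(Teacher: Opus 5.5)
Your proof is correct and follows essentially the paper's route: after deleting a putative cut of size less than $2a$, more than $5a>\alpha(G)$ vertices of $D_2$ survive and $(P_2\cup P_1)$-freeness keeps them in one component, and the threshold $2a$ defining $S_2$ reattaches every surviving vertex of $S_2$; monotonicity then handles $L$. The only difference is cosmetic: you get connectivity of $D_2-X$ from the complete multipartite structure (classes of size at most $a$), whereas the paper picks an edge $yz$ among the survivors and notes every other survivor is adjacent to $y$ or $z$.
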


\begin{proof}
By Proposition~\ref{prop:small-struct}, $S$ is a cutset, $c(G-S)=2$, $|V(D_2)|\ge7a$, $D_2$ is $(P_2\cup P_1)$-free, and $G_1:=G[S_1\cup\{u,v\}]$ is $(P_2\cup P_1)$-free. The sets $V(G_1)$ and $V(G_2)$ are disjoint and partition $V(G)$. Since $G$ is noncomplete, the minimum-degree lemma gives $\delta(G)\ge20$, hence $n\ge21$ and $a\ge21/11$. As above, $\alpha(G)\le a$ and $|V(G_2)|\ge|V(D_2)|\ge7a$.

If $G_2$ is complete, then
\[
\kappa(G_2)=|V(G_2)|-1\ge7a-1\ge2a,
\]
because $5a\ge1$. Suppose that $G_2$ is not complete, and let $W$ be a minimum vertex cut. Assume for a contradiction that $|W|<2a$.

The set $V(D_2)\setminus W$ has order
\[
|V(D_2)\setminus W|\ge|V(D_2)|-|W|>7a-2a=5a>a\ge\alpha(D_2).
\]
Thus it contains an edge $yz$. Let $Q_1$ be the component of $G_2-W$ containing $y,z$. Every vertex of $V(D_2)\setminus W$ lies in $Q_1$: otherwise a vertex $r$ in another component would be nonadjacent to both $y$ and $z$, and $\{y,z,r\}$ would induce $P_2\cup P_1$ in $D_2$.

Since $W$ is a vertex cut, $G_2-W$ has another component $Q_2$. By the preceding paragraph, $Q_2\subseteq S_2$. Choose $x\in Q_2$. If $x$ had a neighbour in $V(D_2)\setminus W$, that edge would join $Q_2$ to $Q_1$, impossible. Hence
\[
N_G(x)\cap V(D_2)\subseteq W.
\]
But $x\in S_2$ means $|N_G(x)\cap V(D_2)|\ge2a$, so $|W|\ge2a$, a contradiction. Therefore $\kappa(G_2)\ge2a$.

Let $L$ be as in the statement and let $G_2^*$ be obtained by adding its edges. Adding edges cannot create a new vertex cut or decrease the order of an existing cut, so $\kappa(G_2^*)\ge\kappa(G_2)\ge2a$. Every independent set of $G_2^*$ is independent in $G$, so $\alpha(G_2^*)\le\alpha(G)\le a$. Since $|L|\le a$,
\[
\kappa(G_2^*)\ge2a\ge|L|+a\ge|L|+\alpha(G_2^*).
\]
This proves the proposition.
\end{proof}

This completes the small-neighbourhood case: Proposition~\ref{prop:small-struct} supplies the two-component cut and the $(P_2 \cup P_1)$-free sides, Proposition~\ref{prop:pathcover} supplies the matched path-cover, and Proposition~\ref{prop:kappa-small} supplies $\kappa(G_2)\ge 2a$ together with the compression after adding at most $a$ edges. Proposition~\ref{prop:small} follows.

\section{THE LARGE-NEIGHBOURHOOD CASE}\label{sec:large}

This section derives the large-side connectivity bound from an asymmetric component analysis. After deleting the low-degree vertices and a hypothetical small vertex cut, exactly two nontrivial components remain. If both are comparatively large, a vertex of very low degree leaves an edge in each component outside its neighbourhood, producing the forbidden induced subgraph. Otherwise one component lies between one and two times the natural scale and the other exceeds six times that scale. The forbidden induced subgraph condition then forces a common multipartition class in the smaller component, contradicting the lower bound on the joint neighbourhood of an edge.

\begin{proposition}[Large-neighbourhood proposition]\label{prop:large}
Let $G$ be a finite simple graph on $n$ vertices with no induced subgraph isomorphic to $2P_2 \cup P_1$. Write $a:=n/11$, and assume $\alpha(G)\le a$. Assume that every edge $uv$ of $G$ satisfies $|N_G(u)\cup N_G(v)|>4a$. Define $S:=\{v\in V(G):\deg(v)<2a\}$, $S_1:=\{x\in S:\deg(x)<a\}$, $\ell:=|S_1|$, and $G_2:=G-S$. Assume $S_1$ is nonempty. Then:
\begin{enumerate}
\item $S$ is independent and $|S|\le a$, and hence $|V(G_2)|\ge10a$;
\item $G_2$ has no vertex cut of order less than $\ell+a$;
\item if $G_2$ is not complete, then $\kappa(G_2)\ge\ell+a$, while if $G_2$ is complete, then $\kappa(G_2)=|V(G_2)|-1\ge10a-1$.
\end{enumerate}
\end{proposition}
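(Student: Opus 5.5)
The plan has three parts: get item (1) from the edge hypothesis, get item (3) from item (2) by definition, and spend most of the effort on item (2), arguing by contradiction about the components of $G_2-W$ for a small cut $W$.

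\textbf{Item (1).} Suppose $x,y\in S$ were adjacent. Then
\[
|N_G(x)\cup N_G(y)|\le \deg(x)+\deg(y)<4a,
\]
which contradicts the hypothesis on every edge. So $S$ is independent. Hence $|S|\le\alpha(G)\le a$, and therefore $|V(G_2)|\ge n-a=10a$.

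\textbf{Item (3) from item (2).} If $G_2$ is complete, then $\kappa(G_2)=|V(G_2)|-1\ge 10a-1$ directly from item (1). If $G_2$ is not complete, a minimum vertex cut exists, and item (2) says it has order at least $\ell+a$. So the real work is item (2).

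\textbf{Item (2).} Assume a vertex cut $W$ of $G_2$ with $|W|<\ell+a\le 2a$, and fix $x\in S_1$, so $\deg(x)<a$. I would proceed in the following order.

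First, I classify the components of $G_2-W$.
\begin{itemize}
\item No three components are nontrivial: an edge from each of two of them and a vertex of the third would induce $2P_2\cup P_1$.
\item The trivial components form an independent set, so there are at most $a$ of them.
\item Since $|V(G_2)\setminus W|>10a-2a=8a$, some nontrivial component exists.
\end{itemize}

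Next, I use the low-degree vertex $x$. It has fewer than $a$ neighbours overall and no neighbour in $S$, because $S$ is independent. If a component $Q$ satisfies $|Q\setminus N(x)|>a$, then $Q\setminus N(x)$ contains an edge, since $\alpha(G)\le a$. If two nontrivial components $Q_1,Q_2$ both have this property, the two edges together with $x$ induce $2P_2\cup P_1$. Pushing the counting through should give the asymmetric picture announced at the start of Section~\ref{sec:large}: the smaller nontrivial component $Q_1$ has order in $(a,2a]$ and the larger has order greater than $6a$.

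The alternative is that there is essentially one nontrivial component plus some isolated vertices $y$. In that case I run the same argument with $y$ in place of one of the components. The neighbours of $y$ lie in $W\cup S$, so they number fewer than $(\ell+a)+a$. A big component minus $N(x)\cup N(y)$ then still contains an edge, and I would look for the forbidden subgraph in that configuration. At this point $\ell$ should be used by noting that every vertex of $S_1$ has fewer than $a$ neighbours, so $S_1$ contributes to the count only through $|W|$. If this direct approach fails, I would first try to pair the isolated vertex with an edge inside $N(x)$ instead.

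Finally, in the asymmetric case I want a contradiction from the small component $Q_1$.
\begin{itemize}
\item For any vertex $z$ of the large component $Q_2$, the set $Q_2\setminus N(z)$ is large. Since $|Q_2|>6a$, it should contain an edge, using $\alpha(G)\le a$.
\item That edge, together with any induced $P_2\cup P_1$ inside $Q_1$, would give $2P_2\cup P_1$. So $Q_1$ is $(P_2\cup P_1)$-free, hence complete multipartite as in Proposition~\ref{prop:pathcover}.
\item The intended contradiction: the vertices of an edge $yz$ in $Q_1$ have all their neighbours in $Q_1\cup W\cup S$.
\item Because $yz$ is an edge, $y$ and $z$ lie in different multipartition classes. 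Neither is adjacent to the other members of its own class, and forcing a class common to the neighbourhood structure should reduce $|N(y)\cup N(z)|$ below
\[
|Q_1|+|W|+|S|-(\text{a class})\le 4a,
\]
contradicting the hypothesis on every edge.
\end{itemize}

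\textbf{Main obstacle.} I expect the hardest step to be exactly these numerical bounds: placing $|Q_1|$ in $(a,2a]$ and $|Q_2|$ above $6a$ while tracking $\ell$. This is where the constant $10$, rather than $11$, must be absorbed.
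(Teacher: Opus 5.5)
Your outline of item (2) has the right shape: at most two nontrivial components, complete multipartite sides, an asymmetric split, and a final edge-neighbourhood contradiction. However, the two steps that carry the content are not supplied, and the mechanism you propose for the last one fails.

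\textbf{Isolated vertices.} The first gap is the case of an isolated vertex $y$ of $G_2-W$. The configuration you describe ($y$, $x$, and one edge of $Q\setminus N(x)$) is only $P_2\cup 2P_1$. When $y$ is adjacent to a vertex of $S_1$ you have no second edge. Your estimate $\deg(y)<(\ell+a)+a$ does not contradict $\deg(y)\ge 2a$. The missing idea is to split $N(y)\cap S$ along $S_1$:
\begin{enumerate}
\item If $y$ has a neighbour $x\in S_1$, then $N(x)\cup N(y)\subseteq N(x)\cup S\cup W$. This set has fewer than $a+a+(\ell+a)\le 4a$ vertices, contradicting the hypothesis on the edge $xy$.
\item Otherwise $\deg(y)\le |S\setminus S_1|+|W|<(a-\ell)+(\ell+a)=2a$, contradicting $y\notin S$.
\end{enumerate}
This is exactly where the $\ell$ in $|W|<\ell+a$ is absorbed. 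It leaves exactly two components, both nontrivial, with orders summing to more than $8a$. Likewise, the bound $|Q_i|>a$, which you assert without derivation, comes from the edge hypothesis. An edge of $Q_i$ has its joint neighbourhood inside $Q_i\cup S\cup W$, and $|S\cup W|<|S|+\ell+a\le 3a$.

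\textbf{The asymmetric contradiction.} Here your mechanism is wrong. For an edge $yz$ of the complete multipartite graph $Q_1$, $z$ is adjacent to all of $y$'s class and $y$ to every other class. So $N(y)\cup N(z)\supseteq Q_1$, and no class can be subtracted. With only $|Q_1|\le 2a$, $|S|\le a$, and $|W|<\ell+a$, the set $Q_1\cup S\cup W$ may have nearly $5a$ vertices. What is needed is the much stronger estimate $|Q_1|+|S|+\ell<3a$, which comes from the vertices of $S_1$:
\begin{enumerate}
\item For $x\in S_1$, the set $Q_1\setminus N(x)$ is nonempty, since $|Q_1|>a>\deg(x)$. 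It is independent: an edge in it, an edge of $Q_2\setminus N(x)$, and $x$ would induce $2P_2\cup P_1$. Hence it lies in a single class $P_x$.
\item The graph $G[Q_1\cup S]$ is $P_4$-free. An induced $P_4$ contains at most two vertices of the independent set $S$, which have fewer than $4a$ neighbours in total. So more than $2a$ vertices of $Q_2$ are anticomplete to the path, and they span an edge completing a $2P_2\cup P_1$.
\item The classes $P_x$ therefore coincide for all $x\in S_1$. If $P_x\ne P_{x'}$, choosing $p\in Q_1\setminus N(x)$ and $q\in Q_1\setminus N(x')$ yields the induced path $x\,q\,p\,x'$.
\item For the common class $P$, the set $C:=Q_1\setminus P$ lies in $N(x)$, so $|C|<a$.
\item An edge $xy$ with $x\in S_1$ and $y\in P$ would give $N(y)\subseteq C\cup S\cup W\subseteq N(x)\cup S\cup W$, again violating the edge hypothesis. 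Hence $P\cup S_1$ is independent and $|P|+\ell\le a$.
\end{enumerate}
Only now does an edge of $Q_1$ have joint neighbourhood smaller than $|Q_1|+|S|+\ell+a<4a$.

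Your bullet claiming $Q_2\setminus N(z)$ is large is unjustified, since vertices of $Q_2$ may have large degree. It is also unnecessary: any edge of $Q_2$ already shows that $Q_1$ is $(P_2\cup P_1)$-free.
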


Neighbourhoods and degrees in this section are taken in $G$. We first isolate the asymmetric configuration that rules out the only remaining small-cut geometry.

\subsection{Asymmetric components}

\begin{lemma}[Asymmetric neighbourhood configuration]\label{lem:asymm}
Let $G$ be a finite simple graph, and let $a>0$ be a real number. Assume $G$ has no induced subgraph isomorphic to $2P_2 \cup P_1$, and assume $\alpha(G)\le a$. Assume that every edge $uv$ of $G$ satisfies $|N(u)\cup N(v)|>4a$. Let $S$ and $W$ be subsets of $V(G)$. Let $S_1$ be a nonempty subset of $S$ such that every vertex of $S_1$ has degree strictly less than $a$ in $G$, and write $\ell:=|S_1|$. Assume that $S$ is an independent set, that $|S|\le a$, and that every vertex of $S$ has degree strictly less than $2a$ in $G$. Assume $|W|<\ell+a$. Let $A$ and $B$ be distinct connected components of $G-(S\cup W)$. Assume that $G[A]$ is complete multipartite, that $|A|>a$, and that $|B|>6a$. Then these hypotheses are inconsistent.
\end{lemma}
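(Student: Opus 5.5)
The plan is to play one low-degree vertex $x\in S_1$ against the large component $B$, so that the structure of $A$ becomes rigid, and then to exhibit an edge of $A$ whose joint neighbourhood has order at most $4a$. The whole argument uses only the edge $x$–$B$–$A$ interplay and the anticompleteness of distinct components of $G-(S\cup W)$.

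\textbf{Step 1: an edge of $B$ avoiding $N(x)$.} Fix $x\in S_1$, which exists since $S_1\neq\emptyset$. Since $d(x)<a$ and $|B|>6a$, we get $|B\setminus N(x)|>5a>a\ge\alpha(G)$. Hence $B\setminus N(x)$ contains an edge $yz$, and neither $y$ nor $z$ is adjacent to $x$.

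\textbf{Step 2: the non-neighbours of $x$ in $A$ lie in one class.} Let $pq$ be any edge of $A$ with $p,q\notin N(x)$. The components $A$ and $B$ are anticomplete, and $x\in S$ lies outside both. So $\{x,y,z,p,q\}$ would induce $2P_2\cup P_1$. Therefore $A\setminus N(x)$ is independent. Because $G[A]$ is complete multipartite, two vertices in different classes are adjacent, so $A\setminus N(x)$ lies inside a single multipartition class $P_x$. Since $|A|>a>d(x)$, this set is nonempty. Moreover every vertex of $A\setminus P_x$ is adjacent to $x$, which gives
\[
|A\setminus P_x|\le d(x)<a.
\]
I would run this for every $x\in S_1$. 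The goal is to show that all the classes $P_x$ coincide with one common class $P$. If $P_x\neq P_{x'}$, then $A\setminus P_x$ and $A\setminus P_{x'}$ together cover $A$, so $|A|<2a$, and I would try to exploit this alongside the bound in Step 3. After that, I fix a vertex $p\in P$ that is a non-neighbour of as many $S_1$-vertices as possible, ideally all of them.

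\textbf{Step 3: a small edge.} Every class has order at most $\alpha(G)\le a<|A|$, so $A$ has a second class. Choose $q$ in a class $Q\neq P$; then $pq$ is an edge. Its joint neighbourhood avoids $B$, $P\setminus\{p\}$, $Q\setminus\{q\}$, and every vertex outside $A\cup B\cup S\cup W$, since $A$ is a component of $G-(S\cup W)$. It also avoids those vertices of $S_1$ that miss both $p$ and $q$. This gives
\[
|N(p)\cup N(q)|\le |A\setminus(P\cup Q)|+2+|W|+|N(\{p,q\})\cap S|.
\]
For the $A$-term, $|A\setminus(P\cup Q)|\le|A\setminus P|<a$. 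For the remaining terms, I would use $|W|<\ell+a$ and $|S|\le a$, and I need to trade the $\ell$ in the bound on $W$ against the $S_1$-vertices nonadjacent to $p$. The target is $|N(p)\cup N(q)|\le 4a$, which contradicts the hypothesis that every edge has joint neighbourhood of order greater than $4a$.

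\textbf{Main obstacle.} The hard part is this last accounting. The crude bound $|W|+|S|<\ell+2a$ is not enough, since $\ell$ can be as large as $a$. So one really needs many vertices of $S_1$ to be non-neighbours of $p$ or of $q$, and this is exactly why a common class $P$ for all $x\in S_1$ is needed. There is a tension here: a vertex $x$ with $p\in P_x$ need not be nonadjacent to $p$, because $P_x$ can contain neighbours of $x$. I would first try to choose $p$ by averaging over $P$. Each $x\in S_1$ has fewer than $a$ neighbours, so by double counting some $p\in P$ has few $S_1$-neighbours. If that fails, I would refine the choice of $q$ symmetrically, using the degree bound $d<2a$ on all of $S$.
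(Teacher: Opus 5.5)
There is a genuine gap, and it sits exactly at the point you call the main obstacle. The estimate in your Step~3 is also wrong. Since $G[A]$ is complete multipartite and $P\neq Q$, the vertex $q$ is adjacent to all of $P$ and $p$ is adjacent to all of $Q$. Hence $(N(p)\cup N(q))\cap A=(A\setminus P)\cup(A\setminus Q)=A$, so the $A$-term is $|A|$, not $|A\setminus(P\cup Q)|+2$. The trade against $\ell$ you hope for is vacuous. Once all $x\in S_1$ share the class $P$, each of them is adjacent to every vertex of $A\setminus P$, in particular to $q$. So no vertex of $S_1$ misses both $p$ and $q$, and averaging over $p\in P$ gains nothing. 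More generally, every edge of $A$ has an endpoint in $A\setminus P$, so its joint neighbourhood contains $A\cup S_1$. The best available bound is then $|A|+|S|+|W|<2a+|P|+|S|+\ell$. When $|S|=a$ this is at most $4a$ only if $|P|+\ell\le a$, and your plan never bounds $|P|$ beyond $|P|\le\alpha(G)\le a$. Separately, the common class is not established: $P_x\neq P_{x'}$ only gives $|A|<2a$, which is no contradiction, since in the application $a<|A|\le 2a$.

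The missing idea, which is how the paper proceeds, is to apply the edge hypothesis to a \emph{cross} edge before looking inside $A$. Suppose $x\in S_1$ and $y\in P$ were adjacent. Then $N(y)\subseteq(A\setminus P)\cup S\cup W\subseteq N(x)\cup S\cup W$, so $|N(x)\cup N(y)|<a+|S|+(\ell+a)\le 4a$, a contradiction. Thus $P\cup S_1$ is independent, which gives $|P|+\ell\le\alpha(G)\le a$ and hence $|A|+|S|+\ell<3a$. Now \emph{any} edge $uv$ of $A$ satisfies $|N(u)\cup N(v)|\le|A|+|S|+|W|<4a$. For the common class you need an actual forbidden subgraph rather than a size bound. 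If $P_x\neq P_{x'}$, take $q\in A\setminus N(x')$. Then $q\notin P_x$, so $xq$ is an edge, and $x'$ is adjacent to neither $x$ nor $q$. The set $B\setminus(N(x)\cup N(x'))$ has more than $4a$ vertices, so it contains an edge, and that edge completes an induced $2P_2\cup P_1$. The paper phrases this as $P_4$-freeness of $G[A\cup S]$, applied to the induced path $x\,q\,p\,x'$ with $p\in A\setminus N(x)$.
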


\begin{proof}
Possibly replacing $W$ with $W\setminus S$, we may assume that $W$ is disjoint from $S$. The sets $A$ and $B$ are then disjoint from $S\cup W$, and distinct components of $G-(S\cup W)$ are anticomplete.

\medskip
\noindent\textbf{Step 1.} The induced subgraph $H:=G[A\cup S]$ is $P_4$-free. Suppose that $P=p_1p_2p_3p_4$ is an induced path in $H$, and put $X:=V(P)\cap S$. Since an induced $P_4$ has independence number two and $S$ is independent, $|X|\le2$. Set $B_0:=B\setminus N_G(X)$. Every vertex of $S$ has degree strictly less than $2a$, and hence
\[
|N_G(X)|\le\sum_{x\in X}\deg(x)<|X|\,2a\le4a.
\]
Together with $|B|>6a$, this gives $|B_0|>2a$. Every vertex of $B_0$ is anticomplete to $P$: it has no neighbour in $X$ by construction and no neighbour in $V(P)\cap A$ because $A$ and $B$ are distinct components. If $B_0$ were independent, then $|B_0|>a\ge\alpha(G)$, a contradiction. Choose an edge $yz$ in $B_0$. The five vertices $\{y,z,p_1,p_2,p_4\}$ induce $2P_2\cup P_1$, contradicting the forbidden-subgraph hypothesis. Thus $H$ is $P_4$-free.

\medskip
\noindent\textbf{Step 2.} Fix $x\in S_1$. The set $L_x:=A\setminus N_G(x)$ is nonempty because
\[
|L_x|\ge|A|-\deg(x)>a-a=0.
\]
It is independent. Otherwise an edge $uv$ in $L_x$ together with an edge in $B\setminus N_G(x)$ and the vertex $x$ would induce $2P_2\cup P_1$; the required edge in $B\setminus N_G(x)$ exists because
\[
|B\setminus N_G(x)|\ge |B|-\deg(x)>6a-a=5a\ge\alpha(G).
\]
Since $G[A]$ is complete multipartite, $L_x$ lies in a unique multipartition class $P_x$ of $G[A]$. Writing $C_x:=A\setminus P_x$, every vertex of $C_x$ is adjacent to $x$.

The classes $P_x$ coincide for all $x\in S_1$. If $x,y\in S_1$ had $P_x\ne P_y$, choose $p\in L_x\subseteq P_x$ and $q\in L_y\subseteq P_y$. Then $p\in C_y$ and $q\in C_x$, so $yp$ and $xq$ are edges; also $pq$ is an edge because the two vertices lie in distinct multipartition classes. The set $S$ is independent, and $xp$ and $yq$ are nonedges by the definitions of $L_x$ and $L_y$. Therefore $x-q-p-y$ is an induced $P_4$ in $H$, a contradiction. Let $P$ be this common class and put $C:=A\setminus P$. Every vertex of $S_1$ is complete to $C$, so $|C|<a$.

\medskip
\noindent\textbf{Step 3.} There is no edge between $S_1$ and $P$. Suppose $xy\in E(G)$ with $x\in S_1$ and $y\in P$. Every neighbour of $y$ outside $S\cup W$ lies in the component $A$; inside $A$, the vertex $y$ has no neighbour in its own part $P$. Hence
\[
N_G(y)\subseteq C\cup S\cup W\subseteq N_G(x)\cup S\cup W.
\]
Consequently
\[
\begin{aligned}
|N_G(x)\cup N_G(y)|
&\le \deg(x)+|S|+|W|\\
&< a+|S|+(\ell+a)\\
&=2a+|S|+\ell\le4a,
\end{aligned}
\]
contradicting the standing edge-neighbourhood hypothesis. Thus $P\cup S_1$ is independent, and
\[
|P|+\ell\le\alpha(G)\le a.
\]
Together with $|C|<a$ and $|S|\le a$, this yields
\[
|A|+|S|+\ell=|C|+|P|+|S|+\ell<3a.
\]

\medskip
\noindent\textbf{Step 4.} The set $A$ is not independent, since $|A|>a\ge\alpha(G)$. Choose an edge $uv$ in $G[A]$. Every neighbour of $u$ or $v$ outside $S\cup W$ lies in the component $A$, so
\[
N_G(u)\cup N_G(v)\subseteq A\cup S\cup W.
\]
Therefore
\[
|N_G(u)\cup N_G(v)|
 <|A|+|S|+\ell+a
 <3a+a=4a,
\]
again contradicting the edge-neighbourhood hypothesis. This contradiction proves the lemma.
\end{proof}

\subsection{\texorpdfstring{Small cuts of $G_2$}{Small cuts of G2}}

\begin{proposition}[No small cut]\label{prop:no-small-cut}
Let $G$ be a finite simple graph on $n$ vertices with no induced subgraph isomorphic to $2P_2 \cup P_1$. Write $a:=n/11$, and assume $\alpha(G)\le a$. Assume that every edge $uv$ of $G$ satisfies $|N(u)\cup N(v)|>4a$. Define $S:=\{v\in V(G):\deg(v)<2a\}$, $S_1:=\{x\in S:\deg(x)<a\}$, $\ell:=|S_1|$, and $G_2:=G-S$. Assume $S_1$ is nonempty. Then $G_2$ has no vertex cut $W$ with $|W|<\ell+a$.
\end{proposition}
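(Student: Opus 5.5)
The plan is to argue by contradiction. Suppose $W$ is a vertex cut of $G_2$ with $|W|<\ell+a$. Then reduce to Lemma~\ref{lem:asymm} by exhibiting two components $A,B$ of $G-(S\cup W)=G_2-W$ with $G[A]$ complete multipartite, $|A|>a$ and $|B|>6a$.

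\textbf{Preliminary facts about $S$.} First I establish the facts Lemma~\ref{lem:asymm} requires of $S$, namely item~(1) of Proposition~\ref{prop:large}: $S$ is independent and $|S|\le a$.
\begin{itemize}
\item Independence: if $x,y\in S$ were adjacent, then $|N(x)\cup N(y)|<2a+2a=4a$, contradicting the edge hypothesis.
\item Size: $|S|\le\alpha(G)\le a$ follows at once.
\item Consequently $|V(G_2)|\ge 10a$, and the total order of $G_2-W$ exceeds $10a-\ell-a\ge 8a$, using $\ell\le|S|\le a$.
\end{itemize}

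\textbf{Shape of the components of $G_2-W$.}
\begin{itemize}
\item \emph{At most two nontrivial components.} Otherwise an edge in each of two of them and a vertex of a third induce $2P_2\cup P_1$.
\item \emph{Trivial components are few.} The vertices forming trivial components are pairwise nonadjacent, so there are at most $a$ of them.
\item \emph{All but one component is small.} Fix $x\in S_1$. If two components each contained an edge avoiding $N(x)$, those two edges together with $x$ would induce $2P_2\cup P_1$. So, apart from one component $B$, every component $D$ has $D\setminus N(x)$ independent, hence $|D|<\deg(x)+a<2a$.
\item \emph{$B$ is large.} Comparing with the total order $>8a$, the component $B$ has more than $8a-2a-a$ vertices, i.e.\ $|B|>5a$. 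I expect to sharpen this to $|B|>6a$ using $\ell\ge1$ and a more careful count: the trivial vertices and the second nontrivial component together occupy roughly $2a$ vertices, not $3a$.
\end{itemize}

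\textbf{Case 1: there is a second nontrivial component $A$.}
\begin{itemize}
\item \emph{$G[A]$ is complete multipartite.} $|B\setminus N(x)|>5a\ge\alpha(G)$, so $B$ has an edge, and that edge is anticomplete to $A$. Hence $G[A]$ is $(P_2\cup P_1)$-free. As in the proof of Proposition~\ref{prop:pathcover}, nonadjacency is then transitive on $A$, so $G[A]$ is complete multipartite.
\item \emph{$|A|>a$.} Take an edge $uv$ of $A$. Its joint neighbourhood lies in $A\cup S\cup W$ and has order $>4a$. So $|A|>4a-|S|-|W|>4a-a-(\ell+a)\ge a$.
\end{itemize}
Once $|B|>6a$ is confirmed, Lemma~\ref{lem:asymm} yields the contradiction.

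\textbf{Case 2: every component other than $B$ is trivial.} This is the step I expect to be the main obstacle, since Lemma~\ref{lem:asymm} does not apply directly. Let $y$ be a trivial component. Then $N(y)\subseteq S\cup W$ and $\deg(y)\ge 2a$. I would first try the following.
\begin{itemize}
\item \emph{$x\not\sim y$ for some $x\in S_1$.} Take such an $x$, together with an edge of $B\setminus N(x)$ (which exists since $|B\setminus N(x)|>5a$). Then use a neighbour $w\in W$ of $y$ and ask whether $yw$ together with that edge and $x$ induces $2P_2\cup P_1$. This requires $w\notin N(x)$ and $w$ anticomplete to the chosen edge. Because $|N(x)|<a$ and $\deg(y)\ge 2a$ with $|S|\le a$, the vertex $y$ has more than $a-|S\cap N(y)|$ neighbours in $W\setminus N(x)$; I would look for one of them nonadjacent to a suitable edge of $B$.
\item \emph{$y$ is adjacent to every vertex of $S_1$.} Then apply the edge-neighbourhood hypothesis to an edge $xy$ with $x\in S_1$: $N(x)\cup N(y)\subseteq N(x)\cup S\cup W$, which has order less than $a+a+(\ell+a)$. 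This needs a further saving, which I expect to extract from $N(y)\cap S$ being disjoint from $N(x)$ (as $S$ is independent).
\end{itemize}
If this local argument fails, I would fall back on showing that such a $y$ together with $S_1$ lies in a common independent set of size exceeding $a$, contradicting $\alpha(G)\le a$.
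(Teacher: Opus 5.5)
\textbf{Gap: the trivial-component case (your Case~2) is not closed.} You flagged this step as the obstacle, and neither of your sub-cases finishes it.

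\emph{Sub-case where $y$ has a neighbour $x\in S_1$.} No further saving is needed here. We have
\[
N(x)\cup N(y)\subseteq N(x)\cup S\cup W,
\]
and this set has order less than $a+a+(\ell+a)\le 4a$, because $\ell\le|S|\le a$. That already contradicts the edge hypothesis. It also needs only one neighbour of $y$ in $S_1$, not all of them.

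\emph{Remaining sub-case.} Your proposed induced $2P_2\cup P_1$ needs a $W$-neighbour $w$ of $y$ with $w\nsim x$, together with an edge of $B$ avoiding $N(x)\cup N(w)$. You have no control over $N(w)\cap B$ for $w\in W$; such a $w$ may have degree far above $2a$. So this route does not close. Your fallback independent set $S_1\cup\{y\}$ has only $\ell+1$ vertices, which need not exceed $a$.

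\emph{The missing idea is a degree count.} If $y$ has no neighbour in $S_1$, then $N(y)\subseteq (S\setminus S_1)\cup W$, so
\[
\deg(y)\le |S|-\ell+|W|<|S|-\ell+\ell+a\le 2a,
\]
contradicting $y\notin S$. The cut bound $|W|<\ell+a$ is calibrated exactly so that $\ell$ cancels. The workable dichotomy is ``$y$ has a neighbour in $S_1$'' versus ``$y$ has none.'' Together the two sub-cases show that $G_2-W$ has no trivial component at all. This is how the paper argues, before it ever invokes Lemma~\ref{lem:asymm}.

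\textbf{The sharpening $|B|>6a$.} This was only asserted in your Case~1, and the fix above supplies it. Once trivial components are excluded, $G_2-W$ consists of exactly $A$ and $B$. Your own bound gives $|A|<2a$, so $|B|>8a-2a=6a$. The appeal to $\ell\ge1$ plays no role.

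Alternatively, you can keep trivial components and still get the bound. The union of $D\setminus N(x)$ over all components $D\ne B$ is independent. Hence all components other than $B$ together have fewer than $\deg(x)+a<2a$ vertices.

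With these two repairs, the rest of your Case~1 goes through and matches the paper's Steps~3--6. That is, $A$ is $(P_2\cup P_1)$-free and hence complete multipartite, $|A|>a$, and Lemma~\ref{lem:asymm} applies.
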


\begin{proof}
The identity $n=11a$ follows from $a:=n/11$.

\medskip
\noindent\textbf{Step 1.} If $xy$ were an edge with both ends in $S$, then the standing hypothesis would give $|N(x)\cup N(y)|>4a$, whereas
\[
|N(x)\cup N(y)|\le\deg(x)+\deg(y)<4a,
\]
a contradiction. Thus $S$ is independent and $|S|\le\alpha(G)\le a$, so $\ell\le a$. Suppose for a contradiction that $W$ is a vertex cut of $G_2$ with $|W|<\ell+a$. Since $W\subseteq V(G_2)=V(G)\setminus S$, it is disjoint from $S$, and
\[
|S\cup W|=|S|+|W|<|S|+\ell+a\le3a.
\]

\medskip
\noindent\textbf{Step 2.} The graph $G_2-W=G-(S\cup W)$ has no trivial component. If $\{z\}$ were such a component, then $z\notin S$ and $\deg(z)\ge2a$, while $N(z)\subseteq S\cup W$. If $z$ had a neighbour $x\in S_1$, then
\[
N(x)\cup N(z)\subseteq N(x)\cup S\cup W
\]
would imply
\[
|N(x)\cup N(z)|<a+|S|+|W|<4a,
\]
contradicting that $xz$ is an edge. Hence $z$ has no neighbour in $S_1$, and
\[
\deg(z)\le |S\setminus S_1|+|W|=|S|-\ell+|W|<2a,
\]
contradicting $\deg(z)\ge2a$.

\medskip
\noindent\textbf{Step 3.} Hence $G_2-W$ has at least two nontrivial components. It has at most two: three nontrivial components would contain two disjoint edges and a vertex from the third component, inducing $2P_2\cup P_1$. Let $Q_1,Q_2$ be the two components. Each is $(P_2\cup P_1)$-free, because an induced $P_2\cup P_1$ in either component, together with an edge of the other component, would contradict the forbidden-subgraph hypothesis. Consequently each $Q_i$ is complete multipartite.

\medskip
\noindent\textbf{Step 4.} Each $Q_i$ contains an edge $uv$. Since all neighbours of $u$ and $v$ outside $S\cup W$ lie in $Q_i$,
\[
|N(u)\cup N(v)|\le |Q_i|+|S\cup W|,
\]
so $|Q_i|>4a-|S\cup W|>a$. Moreover
\[
|Q_1|+|Q_2|=n-|S\cup W|>8a.
\]

\medskip
\noindent\textbf{Step 5.} Suppose $|Q_1|>2a$ and $|Q_2|>2a$. Fix $x\in S_1$. If $\deg(x)\le a$, then $|Q_i\setminus N(x)|>a\ge\alpha(G)$ for $i=1,2$, so each leftover set contains an edge. The two edges together with $x$ induce $2P_2\cup P_1$, contradiction. Thus $\deg(x)>a$, contradicting $x\in S_1$.

\medskip
\noindent\textbf{Step 6.} We are left with the case that one of the two components has order at most $2a$. Relabel them as $A,B$ with $|A|\le|B|$. The bounds above give $a<|A|\le2a$ and
\[
|B|=|Q_1|+|Q_2|-|A|>8a-2a=6a.
\]
The sets $A,B,S,W,S_1$ satisfy all hypotheses of Lemma~\ref{lem:asymm}, which is impossible. This contradiction proves that no such cut $W$ exists.
\end{proof}

\subsection{\texorpdfstring{Connectivity of $G_2$}{Connectivity of G2}}

\begin{corollary}[Connectivity of the large-neighbourhood graph]\label{cor:kappa-large}
Let $G$ be a finite simple graph on $n$ vertices with no induced subgraph isomorphic to $2P_2 \cup P_1$. Write $a:=n/11$, and assume $\alpha(G)\le a$. Assume that every edge $uv$ of $G$ satisfies $|N(u)\cup N(v)|>4a$. Define $S:=\{v\in V(G):\deg(v)<2a\}$, $S_1:=\{x\in S:\deg(x)<a\}$, $\ell:=|S_1|$, and $G_2:=G-S$. Assume $S_1$ is nonempty. Then $S$ is independent and $|S|\le a$, hence $|V(G_2)|\ge 10a$. If $G_2$ is not complete, then $\kappa(G_2)\ge\ell+a$. If $G_2$ is complete, then $\kappa(G_2)=|V(G_2)|-1\ge 10a-1$.
\end{corollary}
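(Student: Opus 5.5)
This corollary packages results already proved, so the plan is to assemble them rather than argue anew.

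\emph{Independence of $S$ and the order of $G_2$.} Repeat Step~1 of Proposition~\ref{prop:no-small-cut}. If two vertices $x,y\in S$ were adjacent, then
\[
|N(x)\cup N(y)|\le\deg(x)+\deg(y)<4a,
\]
which contradicts the standing hypothesis on edges. Hence $S$ is independent, and so $|S|\le\alpha(G)\le a$. It follows that
\[
|V(G_2)|=n-|S|\ge 11a-a=10a.
\]

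\emph{The non-complete case.} Suppose $G_2$ is not complete. Then $\kappa(G_2)$ is, by definition, the minimum order of a vertex cut of $G_2$, and at least one vertex cut exists. Proposition~\ref{prop:no-small-cut} applies under exactly the present hypotheses: $(2P_2\cup P_1)$-freeness, $\alpha(G)\le a$, the edge-neighbourhood bound, and $S_1\neq\emptyset$. It states that no vertex cut has order less than $\ell+a$. Therefore every vertex cut, and in particular a minimum one, has order at least $\ell+a$, giving $\kappa(G_2)\ge\ell+a$.

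\emph{The complete case.} Suppose $G_2$ is complete. Then by definition $\kappa(G_2)=|V(G_2)|-1$, and the order bound above gives $\kappa(G_2)\ge10a-1$.

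There is no real obstacle here. The only point to check is that the hypotheses of Proposition~\ref{prop:no-small-cut} match those of the corollary verbatim, which they do.
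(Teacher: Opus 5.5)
Your proposal is correct and matches the paper's proof essentially verbatim. You show $S$ is independent by the edge-neighbourhood argument of Step~1 of Proposition~\ref{prop:no-small-cut}, deduce $|S|\le\alpha(G)\le a$, invoke that proposition in the non-complete case, and use the definition of $\kappa$ in the complete case; your remark that a non-complete $G_2$ does have a vertex cut is a small point the paper leaves implicit.
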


\begin{proof}
Independence of $S$ is as in Proposition~\ref{prop:no-small-cut}, so $|S|\le a$ and $|V(G_2)|\ge10a$. By Proposition~\ref{prop:no-small-cut}, $G_2$ has no vertex cut of order less than $\ell+a$. If $G_2$ is not complete, its connectivity is the minimum cut order and hence $\kappa(G_2)\ge\ell+a$. If $G_2$ is complete, then by definition $\kappa(G_2)=|V(G_2)|-1\ge10a-1$.
\end{proof}

Proposition~\ref{prop:large} is the conjunction of Corollary~\ref{cor:kappa-large} and Proposition~\ref{prop:no-small-cut}.

\section{PROOF OF THE MAIN THEOREM}\label{sec:proof}

In this section we prove Theorem~\ref{thm:main} from Propositions~\ref{prop:small} and \ref{prop:large}, together with Lemmas~\ref{lem:degree}, \ref{lem:matching-cycle}, \ref{lem:cycle-ext}, \ref{lem:star}, and \ref{lem:delta20}.

\begin{proof}[Proof of Theorem~\ref{thm:main}]
Let $G$ be a finite simple graph on $n\ge 3$ vertices with no induced subgraph isomorphic to $2P_2 \cup P_1$, and assume that $G$ is $10$-tough. Write $a:=n/11$. Open neighbourhoods and degrees are in $G$ unless a subscript indicates otherwise.

If $\delta(G)>n/11-1=a-1$, then Lemma~\ref{lem:degree} applied with $t=10$ yields that $G$ is Hamiltonian. Assume henceforth that $\delta(G)\le a-1$.

The graph $G$ is not complete. Indeed, a complete graph would satisfy $\delta(G)=n-1\le a-1$, hence $n\le n/11$, which is false for $n\ge 3$. Lemma~\ref{lem:delta20} then yields $\delta(G)\ge 20$. Combining this lower bound with $\delta(G)\le a-1$ gives $a\ge 21$ and $n=11a\ge 231$.

Since $G$ is not complete and $n\ge 3$, one has $\alpha(G)\ge 2$. Let $I$ be a maximum independent set, so $|I|=\alpha(G)\ge 2$. Set $X:=V(G)\setminus I$. Then $c(G-X)=|I|\ge 2$, and $10$-toughness gives $|X|\ge 10|I|$, hence $|I|\le n/11=a$, and therefore $\alpha(G)\le a$.

The remainder of the argument splits on the threshold $4a=4n/11$.

\medskip
\noindent\textbf{Case 1.}
Some edge $uv$ of $G$ satisfies $|N_G(u)\cup N_G(v)|\le 4a$.

Apply Proposition~\ref{prop:small} to the edge $uv$. It supplies a cutset $S:=(N_G(u)\cup N_G(v))\setminus\{u,v\}$ such that $D_1:=G[\{u,v\}]$ is a component of $G-S$ and $c(G-S)=2$. Writing $D_2$ for the other component, one has $|V(D_2)|\ge 7a$ and $D_2$ is $(P_2 \cup P_1)$-free. Writing $S_1:=\{x\in S : |N_G(x)\cap V(D_2)|<2a\}$ and $G_1:=G[S_1\cup\{u,v\}]$, the graph $G_1$ is $(P_2 \cup P_1)$-free.

Write $S_2:=S\setminus S_1$ and $G_2:=G[S_2\cup V(D_2)]$. Then $V(G)=V(G_1)\cup V(G_2)$ is a partition, $|V(G_1)|\ge 2$, and $|V(G_2)|\ge 7a\ge 147\ge 2$. Proposition~\ref{prop:small} further supplies a $V(G_2)$-matched path-cover $Q$ of $V(G_1)$ with $|Q|=\max\{1,s(G_1)\}$.

We claim that $|Q|\le\alpha(G_1)\le a$. If $G_1$ is complete, then $|Q|=1=\alpha(G_1)$. If $G_1$ is not complete, then for every vertex set $X$ with $c(G_1-X)\ge 2$, choosing one vertex from each component of $G_1-X$ produces an independent set of $G_1$, hence $c(G_1-X)\le\alpha(G_1)$, and therefore $s(G_1)\le\alpha(G_1)$. Since $\alpha(G_1)\ge 1$, we obtain $|Q|=\max\{1,s(G_1)\}\le\alpha(G_1)\le\alpha(G)\le a$.

For each path $P\in Q$ write $z_P$ and $w_P$ for its two endpoints in $V(G_2)$. Let $L$ be the set of $|Q|$ edges $\{z_Pw_P : P\in Q\}$, retaining any such edge already present in $G_2$. Then $L$ is a matching of size $|L|=|Q|\le a$ with both ends in $V(G_2)$.

Let $G_2^*$ be obtained from $G_2$ by adding every edge of $L$. Proposition~\ref{prop:small} applied to this set $L$ yields $\kappa(G_2)\ge 2a$, $\kappa(G_2^*)\ge\kappa(G_2)\ge 2a$, and $\alpha(G_2^*)\le\alpha(G)\le a$, hence $\kappa(G_2^*)\ge |L|+\alpha(G_2^*)$. Moreover $|V(G_2^*)|\ge 7a\ge 147\ge 3$.

Lemma~\ref{lem:matching-cycle} applied to $H:=G_2^*$ and the matching $L$ therefore supplies a Hamilton cycle $C^*$ of $G_2^*$ containing every edge of $L$. Replace each edge $z_Pw_P\in L$ on $C^*$ by the corresponding path $P$ of $Q$. The paths of $Q$ are pairwise vertex-disjoint, cover $V(G_1)$, and meet $V(G_2)$ only at their endpoints, so the resulting closed walk is a Hamilton cycle of $G$. This completes Case~1.

\medskip
\noindent\textbf{Case 2.}
Every edge $uv$ of $G$ satisfies $|N_G(u)\cup N_G(v)|>4a$.

Define $S:=\{v\in V(G):\deg(v)<2a\}$, $S_1:=\{x\in S:\deg(x)<a\}$, and $\ell:=|S_1|$. Write $G_2:=G-S$.

The set $S$ is independent: an edge $xy$ inside $S$ would force $|N_G(x)\cup N_G(y)|>4a$ while $|N_G(x)\cup N_G(y)|\le\deg(x)+\deg(y)<4a$. Thus $|S|\le\alpha(G)\le a$.

The set $S_1$ is nonempty: $\delta(G)\le a-1$, so any vertex $v$ of degree $\delta(G)$ satisfies $v\in S_1$. In particular $\ell\ge 1$ and $\ell\le|S|\le a$.

Lemma~\ref{lem:star} applied to the noncomplete $10$-tough graph $G$ and the independent set $I:=S_1$ supplies a $K_{1,2}$-matching whose centres are exactly $S_1$. For each $x\in S_1$ write $z_x$ and $w_x$ for the two leaves. The $2\ell$ leaves are pairwise distinct and lie in $V(G_2)$.

Let $L$ be the set of $\ell$ edges $\{z_xw_x : x\in S_1\}$. Then $L$ is a matching of size $|L|=\ell$ with both ends in $V(G_2)$. Let $G_2^*$ be obtained from $G_2$ by adding every edge of $L$.

Proposition~\ref{prop:large} now applies with $S_1$ nonempty. It yields $|V(G_2)|\ge 10a$. We claim that $\kappa(G_2)\ge\ell+a$ in both subcases. The noncomplete subcase is Proposition~\ref{prop:large}(3). In the complete subcase, $\ell\le a$ and $a\ge 21$ give $\kappa(G_2)=|V(G_2)|-1\ge 10a-1\ge 2a\ge\ell+a$.

Adding edges cannot lower connectivity or raise the independence number, so $\kappa(G_2^*)\ge\kappa(G_2)\ge\ell+a$ and $\alpha(G_2^*)\le\alpha(G)\le a$. With $|L|=\ell$ one has $\kappa(G_2^*)\ge |L|+\alpha(G_2^*)$. Moreover $|V(G_2^*)|\ge 10a\ge 210\ge 3$.

Lemma~\ref{lem:matching-cycle} applied to $H:=G_2^*$ and the matching $L$ therefore supplies a Hamilton cycle $C^*$ of $G_2^*$ containing every edge of $L$. Replace each edge $z_xw_x\in L$ on $C^*$ by the path $z_x{-}x{-}w_x$. The stars are vertex-disjoint and their centres lie outside $V(G_2)$, so the result is a cycle $C_0$ of $G$ with $V(C_0)=V(G_2)\cup S_1$.

It remains to insert $R:=S\setminus S_1$. Start with the cycle $C:=C_0$. While $R$ is not contained in $V(C)$, choose $x\in R\setminus V(C)$ and put $H:=G[\{x\}]$. Then $C$ is non-Hamiltonian and $H$ is a connected subgraph of $G-V(C)$. Independence of $S$ and the membership $x\in S\setminus S_1$ give $x$ at least $a$ neighbours in $V(G_2)\subseteq V(C)$, hence strictly more than $a-1$ neighbours on $C$. Lemma~\ref{lem:cycle-ext} applied with $t=10$ therefore supplies a cycle $C'$ with $V(C)\subseteq V(C')$ and $x\in V(C')$. Replace $C$ by $C'$. After at most $|R|$ iterations one has $R\subseteq V(C)$. Since the cycle already contains $V(G_2)\cup S_1$, it is Hamiltonian. This completes Case~2.

Both cases produce a Hamilton cycle of $G$. This proves the theorem.
\end{proof}

\section{CONCLUDING REMARKS}

The proof of Theorem~\ref{thm:main} has two structural components. On the small-neighbourhood side, the inequalities
\[
|V(D_2)|\ge 7a
\qquad\text{and}\qquad
|N(\{x,y,z\})\cap V(D_2)|<6a
\]
leave more than $a$ vertices, while toughness gives $\alpha(G)\le a$; this forces the edge needed for the compression argument. On the large-neighbourhood side, Lemma~\ref{lem:asymm} analyses the two surviving components asymmetrically and yields
\[
\kappa(G_2)\ge \ell+a.
\]
This is the step that supplies the connectivity required by the prescribed-matching cycle theorem.

Lowering the toughness threshold below $10$ would require a stronger large-side connectivity mechanism and new estimates for the balanced path-cover branch. Developing such refinements remains an open direction.

\end{document}